\newcommand{\eqed}{\hfill\ensuremath{\diamond}}
\newcommand{\eps}{\ensuremath{\varepsilon}}
\newcommand{\sig}{\ensuremath{\Sigma}}
\newcommand{\false}{\textit{False}}
\newcommand{\height}{\ensuremath{\textit{ht}}}
\long\def\ifnodedefined#1#2#3{%
    \@ifundefined{pgf@sh@ns@#1}{#3}{#2}%
}
\newcommand{\xcomment}[2]{%
    \tikz[overlay,remember picture] {%
        \node[anchor=north west,xshift=.75em,yshift=.75\baselineskip] (c) at (0,0-|current page text area.east) {\parbox{3cm}{\scriptsize\raggedright #1}};%
        \draw (c.north west) edge[line width=3pt,#2!50!white] (c.south west);%
        \draw[#2!50!white] (0,-.3em-|c.west) -- (0,-.3em);%
        \draw[#2!50!white] (0,-.3em) edge[->] (0,.2em);%
    }%
}
\newcommand{\mbecomment}[1]{\xcomment{#1}{blue}}
\newcommand{\brinkcomment}[1]{\xcomment{#1}{red}}
\tikzset{>=latex'}
\providecommand{\keywords}[1]
{{
  \small	
  \textbf{\textit{Keywords---}} #1
}}
\newcommand{\change}[1]{\textcolor{black}{#1}}
\newcommand{\changeB}[1]{\textcolor{black}{#1}}
\renewcommand{\emptyset}{\varnothing}
\newcommand\footnoteref[1]{\protected@xdef\@thefnmark{\ref{#1}}\@footnotemark}
\newcommand{\tok}{\ensuremath{\mathord{\,\raisebox{0.5pt}{\ensuremath{\wr}}\,}}}
\newcommand{\Tk}{\ensuremath{\mathbb{T}}}
\newcommand{\TkEmpty}{\ensuremath{\Tk^{\emptyset}}}
\DeclareMathOperator{\vocab}{vocab}
\DeclareMathOperator{\maxvocab}{maxvocab}
\DeclareMathOperator{\maxtok}{maxtok}
\newtheorem{theorem}{Theorem}
\newtheorem{lemma}{Lemma}
\newtheorem{corollary}{Corollary}
\newtheorem{observation}{Observation}
\theoremstyle{definition}
\newtheorem{definition}{Definition}
\newtheorem{algorithm}{Algorithm}
\theoremstyle{remark}
\newtheorem{example}{Example}
\newtheorem{remark}{Remark}
\title{Ordered Context-Free Grammars Revisited}
\author{
  Brink van der Merwe
  \institute{Department of Computer Science\\Stellenbosch University\\ Stellenbosch, South Africa}
  \email{abvdm@cs.sun.ac.za}
}
\newcommand{\tokenbox}[1]{%
  \tikz[baseline] {%
    \node[outer sep=0,inner sep=0,anchor=base west] (text) {\hskip.7pt #1\hskip.7pt\vphantom{$\hat{I}$}};%
    \begin{pgfonlayer}{background}%
      \draw[very thick,opacity=0.3,red] (-.1em,-.2em) rectangle (text.north east);%
    \end{pgfonlayer}
  }%
}
\begin{document}

\maketitle

\begin{abstract}
We continue our study of ordered context-free grammars, a grammar formalism that places an order on the parse trees produced by the corresponding context-free grammar. In particular, we simplify our previous definition of a derivation of a string for a given ordered context-free grammar, and  present a parsing algorithm, using shared packed parse forests, with time complexity $O(n^4)$, where $n$ is the length of the input string being parsed.
\end{abstract}

\keywords{Ordered context-free grammars, Unambiguous grammar formalisms, Shared packed parse forests}

\section{Introduction}
\label{sec:intro}
Ordered context-free grammars (oCFGs), a grammar formalism introduced in~\cite{vdMerwe22}, provides an alternative to parsing expression grammars (PEGs), when requiring an unambiguous grammar formalism. This formalism has much easier to understand matching semantics compared to PEGs, 
 but this comes at the price of much worse parsing time \changeB{complexity. Indeed, the complexity is} $O(n^4)$ compared to linear, where $n$ is the length of the input string being parsed. It should be noted that this is not worse than the adaptive LL(\verb|*|) algorithm, used in the popular parser generator ANTLR~\cite{LLstar}.
Ordered context-free grammars are unambiguous, since we select the least parse tree for a given input string (if possible), based on the order induced on parse trees by the oCFG formalism. 

The matching semantics of oCFGs are more intuitive than PEGs, since an oCFG \change{matches} exactly the same string language as the corresponding context-free grammar (CFG), in contrast to PEGs.
We obtain PEGs from context-free grammars by 
replacing the choice operator, typically denoted by the pipe character `$\ |\ $', by an ordered choice operator, i.e.\ the choice operator becomes non-commutative. The semantics of the ordered choice operator is such that if the first alternative succeeds locally, i.e.\ if the ordered choice lets the current nonterminal consume some substring starting at the current position without regard for the overall match, the second alternative is never attempted. 
Specifying when a rule succeeds locally should be stated with more care -- more on this later in the introduction.
The oCFG formalism also makes use of an ordered choice operator, but the emphasis is on overall instead of local success.
Despite the popularity of PEGs as unambiguous grammar formalism, there are some downsides, for example, proving that a given PEG matches an intended string language is often complicated. As pointed out in~\cite{ComputationalPower}, the influence of PEGs can be illustrated by the fact that despite having been introduced only twenty years ago, the number of PEG-based parser generators \change{exceeds the} number of parser generators based on any other parsing method. 

The unexpected behaviour of PEGs can for example be seen when considering the PEG with $S\rightarrow aSa\, /\, a$ as the only production, describing the regular language $\{a^{2n+1}\mid n\ge 0\}$ when replacing the PEG with the corresponding CFG. Normally, PEGs use the symbol `$\leftarrow$' in productions, and not `$\rightarrow$', although we will deviate from this convention. Next, we explain (informally) why this PEG does not match $a^5$, while matching, for example, $a^3$ and $a^7$. We also discuss this example more formally in Example~\ref{ex:a5}. Given that $S\rightarrow aSa\mid a$ is an example of an unambiguous CFG, we note that the PEG formalism not only makes a CFG unambiguous, but might also reduce the set of strings being matched.
Let's also consider $S\rightarrow aSa \mid a$ as an oCFG. In both the PEG and oCFG case, derivations begin by applying the rule $S\rightarrow aSa$ twice, but the PEG then applies $S\rightarrow aSa$ a \emph{third time,} as it considers this rule as being ``locally successful'', since the right-hand side of the rule consumes the 3rd to the 5th `$a$' (after replacing the $S$ in $aSa$ with $a$). But this will cause the 2nd application of $S\rightarrow aSa$ to fail. 
In comparison to PEGs, the oCFG would select $S\rightarrow a$ as the 3rd rule to apply. This ensures that applying $S\rightarrow aSa$ is successful as the 2nd derivation step, and in this way we obtain a successful oCFG derivation of $a^5$. That is, PEGs select the first locally successful rule, whereas with oCFGs, the first rule which enables overall derivation success, is selected. As stated before, applying a rule $r$ to rewrite a nonterminal $A$, is regarded as locally successful, if by applying $r$, and keeping on rewriting the nonterminals produced by $r$, we obtain a string of terminals which is a prefix of the remainder of the input string. But, in PEGs, the selection later of locally successful rules in a derivation, applied to nonterminals produced by an earlier rule application step, has precedence over the success of the earlier selected rule. Thus, a \changeB{selected rule} might fail in PEGs, since local success preference is given to later applied rules. 

The non-commutativity of the choice operator in PEGs can be seen when changing the above example to $S\rightarrow a\mid aSa$, and noting that in this case only the input string $a$ is matched. In oCFGs, the operator `$|$' is also non-commutative when considering the order on the parse trees produced by an oCFG, but not when only considering the strings being matched. Of course, given that $S\rightarrow aSa\mid a$ is an unambiguous CFG, it makes no difference whether this example grammar is considered as a CFG or as an oCFG.

The oCFG formalism is a natural way to generalize Perl-compatible regular expression (PCRE) matching, to context-free parsing. PCRE matching semantics is used in almost all regular expression matching libraries. See for example~\cite{wild} for a discussion \changeB{on} how real-world regex matching semantics are deeply intertwined with a depth-first backtracking parsing technique. In both PCRE regex matching and oCFGs, ambiguity is removed in perhaps the most natural generic way, i.e.\ when having multiple transition or rule choices, we place and preference on which one should be used, by ordering transitions and rules respectively. 

When considering regular expressions, PEGs correspond to the atomic operator (see~\cite{atomic}), as illustrated in the following example. Consider the regular expression $r:= a^*a$, which we translate into a CFG $G_r$ with productions  $S\rightarrow Aa$ and $A\rightarrow aA\mid \varepsilon$. When using the atomic operator in $r$ to obtain $r'$, with $r':=(\triangleright a^*)a$, we obtain the corresponding PEG $G_{r'}$ with productions $S\rightarrow Aa$ and $A\rightarrow aA\ /\ \varepsilon$. In this case, $(\triangleright a^*)$ consumes locally as many characters as possible, and thus $r'$ and $G'_{r'}$ describe the empty language. In both regexes and grammars, atomic operators and parsing expressions grammars provides respectively improved efficiency in matching or parsing, but at the cost of often difficult to understand or unexpected matching behaviour.

In~\cite{LLstar}, it is pointed out that the parser generator ANTLR, a top-down parser generator developed by Terence Parr, uses the order in which rules are specified, as one way of resolving ambiguities. The parser generator YACC (see~\cite{yacc-rr}) also uses the order of rules to resolve reduce-reduce conflicts.
This observation provides additional motivation for why the oCFG formalism is of interest.
%

\changeB{Strictly speaking, we should rather refer to oCFGs, as ordered parse tree context-free grammars, given, as will be shown in the next section, the order of rules in an oCFG is used to obtain an order on the parse trees. The terminology ``ordered context-free grammars'' is also used in the regulated rewriting community for a related formalism (see for example~\cite{Fris, Lepisto}). In this related formalism, a partial order is placed on the grammar rules, and a rule is not allowed to be applied to a sentential form if a larger rule is also applicable to the sentential form. In contrast to PEGs (or oCFGs), this regulated rewriting formalism determines if a rule is applicable to a sentential form (and that there are no larger applicable rules), and not if a rule is both applicable and succeeds locally (or respectively, guantees overall success).}

In this paper, we simplify the notion of an oCFG derivation in Section~\ref{sec:derivations}, compared to~\cite{vdMerwe22}, by not explicitly modelling backtracking. In Section~\ref{sec:parsing_complexity}, we also consider the complexity of parsing oCFGs, a question not considered before. Results from~\cite{vdMerwe22} required to follow the exposition in this paper, are stated without proof. 
The outline of this paper is as follows. The next two sections provide definitions and elementary results on oCFGs and on PEGs. Then, oCFG derivations are considered, after which we discuss oCFG parsing by using shared packed parse forests. Finally, we present our conclusions and a discussion on envisioned future work. 

\section{Definitions and elementary properties of oCFG}
\label{sec:elemntary}

Next, we define oCFGs. In an oCFG, we order all rules with the same nonterminal on the left-hand side, and then number each of these collections of rules, from one onward.
We consider only the leftmost derivations, and associate a list of integers with each derivation, based on rules used in the derivation, from left to right. Derivations (and parse trees) can thus be compared and ordered, using the lexicographic ordering of the list of integers associated with a derivation.
We also consider a subclass of oCFGs, where for each string $w$ in the language of the grammar, there is a least derivation (and thus parse tree) for $w$. 
Thus, oCFGs extend CFGs in such a way that the strings accepted, and their corresponding parse trees are the same, but we also have an order on the parse trees. 

In the following definition, we define trees, which will mostly be used as parse trees in this paper.

\begin{definition}\label{def:tree}
The set of \emph{ordered, rooted and ranked trees,} over a finite ranked alphabet $\Gamma=\cup_{i=0}^{\infty}\Gamma_i$, denoted by $\mathcal{T}_\Gamma$, where $\Gamma_i$ is the set of alphabet symbols of rank $i$, is defined inductively as follows:
\begin{itemize}
\item if $a\in\Gamma_0$, then $a\in\mathcal{T}_\Gamma$;
\item if $a\in\Gamma_k$ and $t_i\in\mathcal{T}_\Gamma$ for $1\le i\le k$, then $a[t_1,\ldots,t_k]\in\mathcal{T}_\Gamma$. 
\end{itemize}
The height of $t\in\mathcal{T}_\Gamma$, denoted $\height(t)$, is defined inductively as follows. We let $\height(t)=0$ if $t=a\in\Gamma_0$, otherwise, if $t=a[t_1,\ldots,t_k]$, then $\height(t)=1+max(\height(t_1),\ldots,\height(t_k))$.
\end{definition}

Next, we define trees referred to as contexts. Using contexts, we can construct a larger tree by substituting the special symbol $\Box$, by another tree.

\begin{definition}
    \label{defn:context}%
\changeB{Assume $\Box$ is a symbol of rank 0 that is not in the ranked alphabet $\Gamma$. Denote by $\mathcal{C}_\Gamma$ the set of trees over the ranked alphabet $\Gamma\cup\{\Box\}$, where each tree has precisely one leaf node labelled by $\Box$.
A tree in $\mathcal{C}_\Gamma$ is referred to as a \emph{context}.}

For $t\in\mathcal{C}_\Gamma$ and $t'\in \mathcal{C}_\Gamma\cup\mathcal{T}_\Gamma$, denote by $t\llbracket t'\rrbracket\in\mathcal{C}_\Gamma\cup\mathcal{T}_\Gamma$ the tree obtained by replacing the instance of $\Box$ in $t$, by $t'$. 
\end{definition}

Now, we are ready to define ordered context-free grammars, which at this stage, looks the same as CFGs. The way in which we extend CFGs to obtain oCFGs, will become clear once explain how to order parse trees.

\begin{definition} 
An \emph{ordered context-free grammar} $G$ is a tuple $(N,\Sigma,P,S)$, where:
\begin{itemize}
\item[(i)] $N$ is a finite set of nonterminals;
\item[(ii)] $\Sigma$ the input alphabet; 
\item[(iii)] $P$ is the production function and for $A\in N$, we have $P(A)=(r^A_1,\ldots,r^A_{n_A})$, with $r^A_i\in(N\cup\Sigma)^*$;
\item[(iv)] $S \in N$ is the start nonterminal.
\end{itemize}
\end{definition}

When $P(A)=(r^A_1,\ldots,r^A_{n_A})$, we  also use the notation $A\rightarrow r^A_1 \mid \cdots \mid r^A_{n_A}$.
The order of the $r^A_i$, in  $(r^A_1,\ldots,r^A_{n_A})$, will play a role in the order of the parse trees, defined later. In results where order is not important, we will mostly use the terminology CFG, instead of oCFG.

We refer to $A\rightarrow r^A_1 \mid \cdots \mid r^A_{n_A}$ as a production, and to $A\rightarrow r^A_i$, for some $1\le i\le n_A$, as a rule. As is usual in CFGs, we say that for $u,v\in (N\cup\Sigma)^*$, that $u$ directly yields $v$, written as $u\Rightarrow v$, if $u=u_1Au_2$ and $v=u_1r^A_iu_2$, for some $1\le i\le n_A$. Also, we denote by $\Rightarrow^*$ the reflexive transitive closure of $\Rightarrow$, and by $\Rightarrow^+$ the transitive closure of $\Rightarrow$. If $S\Rightarrow^* u$, for $u\in (N\cup\Sigma)^*$, we refer to $u$ as a sentential form.

A ranked alphabet $\Gamma_G$ (which we will use in parse trees) is associated with an oCFG $G$ as follows. 
\change{Denote by $|v|$ the length of a string $v$, with the length of the empty string $\eps$ taken to be $0$.}
We let
$\Sigma\cup\{\eps\}$ be the elements of rank $0$ in $\Gamma_G$, since these will label the leafs of the parse trees. \changeB{If $P(A)=(r^A_1,\ldots,r^A_{n_{\!A}})$, then define $A_i$, for $1\le i\le n_A$, to be a symbol of rank $\max\{1,|r^A_i|\}$ in $\Gamma_G$.} \changeB{We use the symbols with subscripts, $A_i$,} in parse trees to encode the production choice $A\rightarrow r^A_i$. Since $r^A_i$ might be equal to $\eps$, we take the rank of $A_i$ to be $\max\{1,|r^A_i|\}$, since a node in a parse tree labelled by $A_i$, will still have a child labelled by $\eps$ when $r^A_i=\eps$.

For a tree $t$, the notation $y(t)$ is used for the yield of $t$, i.e. the string of non-$\eps$ leaf symbols in $t$, considered left to right. Thus, to obtain $y(t)$, we delete $\eps$ and all symbols of rank greater than zero and also `$[$', `$]$' and `$,$' in $t$. In the special case where all leaf symbols are $\eps$, we define $y(t)$ to be $\eps$ as well.


\begin{definition}
For an oCFG $G$ and string $w\in\Sigma^*$, we define the set of \emph{parse trees of $w$}, denoted by $\mathcal{P}_G(w)$, as all trees over the ranked alphabet $\Gamma_G$, satisfying the following criteria:
\begin{itemize}
\item[(i)] The root is labelled by some $S_i$, $1\le i\le n_{S}$, where $S$ is the start nonterminal of $G$;
\item[(ii)] $y(t)=w$;
\item[(iii)] The children of a node labelled by $A_i$, ignoring subscripts of nonterminals, \change{are} labelled, in order, by the symbols in $r_i^A$. As a special case, when $|r_i^A|=0$, a node labelled by $A_i$ will have a single child leaf labelled by $\eps$.
\end{itemize}
The \emph{string language defined by $G$}, denoted by $\mathcal{L}(G)$, is the set of strings $w$ for which $\mathcal{P}_G(w)\ne\emptyset$.
\end{definition}



By $\mathcal{L}_{\mathcal{T}}(G)$ we denote \change{the} set of parse trees of $G$, which is the set $\bigcup_{w\in\Sigma^*}\mathcal{P}_G(w)$. We modified the usual definition of parse trees to make it possible to directly read off the productions used to obtain the parse tree, by considering the indices of the nonterminal labels used in the parse tree. More precisely, when doing a pre-order traversal of the non-leaf nodes of a parse tree, the integer subscripts of the nonterminals describe uniquely (with the subscript of a nonterminal indicating which rule choice, from a given production, was made for a given nonterminal) the productions used in a left-most derivation to produce the respective parse tree. Since we know that derivations start with the initial nonterminal $S$, it is not required to know both the nonterminals and their respective indices to deduce the productions used, i.e.\ the indices are  sufficient.


\changeB{For $t\in\mathcal{L}_{\mathcal{T}}(G)$, let $n(t)$ denote the sequence of integers obtained by replacing all symbols $A_i$ in the representation of $t$, as used in Definition~\ref{def:tree}, by $i$, and deleting all other symbols (i.e.\ `$[$', `$]$', `,' and terminal leaves) in the representation of $t$.}

\begin{definition}[Total order on parse trees]
A total order $\prec_G$ is defined on $\mathcal{L}_{\mathcal{T}}(G)$ by letting \emph{$t_1\prec_G t_2$} when $n(t_1)$ is smaller than $n(t_2)$ lexicographically.  
\end{definition}


When having unit or empty rules, oCFGs might not have  well-ordered sets of parse trees for each given input string, and since this is relevant to ensure that oCFGs are unambiguous grammar formalisms, we focus on the following two classes of oCFGs.
\begin{definition}
\changeB{
    Let $G$ be any oCFG. 
    \begin{itemize}
    \item We define $G$ to have \emph{least parse trees} or simply \emph{least trees}, if for all strings $w$, $\mathcal{P}_G(w)$ is either empty or has a least parse tree. 
    \item We define $G$ to be \emph{well-ordered}, if for all strings $w$, the set of trees $\mathcal{P}_G(w)$ is well-ordered (i.e.\ every subset of $\mathcal{P}_G(w)$ has a least parse tree). 
    \end{itemize}
}
\end{definition}
An oCFG having least trees is sufficient to turn oCFGs into an unambiguous grammar formalism by for each $w$ selecting the least tree in $\mathcal{P}_G(w)$. The well-ordered property is stronger, but it is decidable as shown in Theorem~\ref{thm:decide-well-ordered}, in contrast to determining if an oCFG has least trees, which is not decidable (see~\cite{vdMerwe22}).

We can use the order $\prec_G$ to define a {\em filter} on the set of parse trees of the oCFG $G$ (see~\cite{Klint} for more on using filters for disambiguation). For a set $A$, denote by $\Pi(A)$ the power set of $A$. Then a  function $\mathcal{F}:\Pi(\mathcal{L}_{\mathcal{T}}(G))\rightarrow \Pi(\mathcal{L}_{\mathcal{T}}(G))$ is a filter, if for $\Phi\in\Pi(\mathcal{L}_{\mathcal{T}}(G))$, we have $\mathcal{F}(\Phi)\subseteq\Phi$. We define the filter  $\mathcal{F}_G$ such that $\mathcal{F}_G(\Phi)$ consists of the trees $t\in\Phi$, such that for no tree $t'\in\Phi$ (with $t'\not=t$), we have $t'\prec_G t$. Then $G$ having least trees is equivalent to the filter $\mathcal{F}_G$ being \emph{complete}, where a filter is complete if it selects one tree from each non-empty set $\mathcal{P}_G(w)$. 

Instead of using the positive natural numbers, i.e.\ a totally ordered set, to index each of the rules in a given production, from $1$ onwards, we can index the rules by a partially ordered set. These indices can then be used in a lexicographic way, to define a partial order on parse trees. In this way, one can support ordered and unordered choice between rules in a production. Again, we obtain a filter on the set of parse trees, as before, but not necessarily a complete filter. More than one filter can of course be used to remove ambiguity, for example in the LR parser YACC, one could have shift-reduce and reduce-reduce conflicts, where shift-reduce conflicts are resolved by preferring shift over reduce, and only reduce-reduce conflicts are resolved by using the order in which rules are specified.

Next, we provide a sufficient condition for a grammar $G$ to be well-ordered. 
In particular, we provide a necessary and sufficient condition so that all strings $w$ will have finitely many parse trees. We in fact give a necessary and sufficient condition for the opposite, i.e.\ a condition to ensure that some strings will have infinitely many parse trees, which can then be negated.  We assume all nonterminals in $G$ are useful. We define a nonterminal $A$ in $G$ to be \emph{useful} if a sentential form can be derived from $S$ containing $A$, and if a string of terminal symbols can be derived when starting from $A$. 
We say a grammar $G$ is \emph{cyclic} if for some nonterminal $A$ in $G$, we have $A\Rightarrow^+ A$, with $\Rightarrow^+$ being the transitive closure of $\Rightarrow$. Being cyclic is a necessary condition for some strings to have infinitely many parse trees, and conversely,  if each nonterminal in $G$ is useful, then $G$ being cyclic is sufficient for some strings $w$ to have infinitely many parse trees. We thus obtain the following result, generalizing Lemma 1 in~\cite{vdMerwe22}. If in an oCFG $G$ we have $A_1\Rightarrow A_2\Rightarrow\ldots\Rightarrow A_n$, for nonterminals $A_1,\ldots,A_n$ where $A_1=A_n$, we say $G$ has a {\em cycle of unit rules}.
\begin{lemma}
    \label{lemma:sufficient-minimal}%
    \changeB{
Let $G$ be a CFG with all nonterminals being useful.
\begin{enumerate}
    \item  If $G$ is not cyclic, then $\mathcal{P}_G(w)$ is finite for all $w\in\Sigma^*$. 
    \item If $G$ is cyclic, then some strings will have infinitely many parse trees. 
    \item If $G$ neither has any $\eps$-rules nor cycles of unit rules, then it is not cyclic.
    \item If $G$ neither has any $\eps$-rules nor cycles of unit rules, then it is well-ordered.
\end{enumerate}
}
\end{lemma}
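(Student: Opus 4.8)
The plan is to establish the four claims in the order 2, 3, 1, 4, since Part~4 will follow immediately once Parts~1 and~3 are in hand. Throughout I would identify a parse tree with its associated leftmost derivation, and use freely that the yield of the subtree rooted at any node is a (possibly empty) contiguous factor of $w$.

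For Part~2 I would exploit usefulness directly. If $G$ is cyclic, fix a nonterminal $A$ with $A\Rightarrow^+ A$. Since every nonterminal is useful, there is a derivation $S\Rightarrow^* uAv$ together with terminal strings $u\Rightarrow^* p$, $v\Rightarrow^* q$ and $A\Rightarrow^* x$ (each nonterminal occurring in $u$, in $v$, or reachable from $A$, derives some terminal string, as all nonterminals are useful). Splicing $k\ge 0$ copies of the cycle $A\Rightarrow^+ A$ in front of $A\Rightarrow^* x$ produces, for the fixed string $w=pxq$, a derivation for every $k$. Because $\Rightarrow^+$ takes at least one step, each extra copy strictly increases the number of internal nodes, so these derivations give pairwise distinct parse trees of $w$; hence $\mathcal{P}_G(w)$ is infinite.

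For Part~3, assume $G$ has no $\eps$-rules and no cycle of unit rules, and suppose toward a contradiction that $A\Rightarrow^+ A$ for some $A$. The absence of $\eps$-rules means every rule has a right-hand side of length at least $1$, so a single rewrite step never decreases the length of a sentential form. As the derivation $A\Rightarrow^+ A$ starts and ends at a string of length $1$, every intermediate sentential form must have length exactly $1$; since a length-one terminal string cannot be rewritten so as to reach the nonterminal $A$, each intermediate form is a single nonterminal. Thus the derivation is $A=B_0\Rightarrow B_1\Rightarrow\cdots\Rightarrow B_m=A$ with $m\ge 1$ and each step a unit rule $B_i\rightarrow B_{i+1}$, i.e.\ a cycle of unit rules --- a contradiction.

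The main obstacle is Part~1, for which I would bound the height of any parse tree of a fixed string $w$. First, infinitely many parse trees of $w$ would force trees of unbounded height, since over the finite ranked alphabet $\Gamma_G$ (of bounded branching) there are only finitely many trees of any bounded height. Now assign to each node the pair $(i,j)$ delimiting the factor of $w$ that its subtree yields. Along any root-to-leaf path these pairs are nested, so at most $O(|w|^2)$ distinct pairs occur. The key step is that if two nodes on one path carry the same nonterminal $A$ and the same pair $(i,j)$, then the material to the left and to the right of the lower node inside the upper node's subtree yields $\eps$, giving $A\Rightarrow^+ \alpha A\beta$ with $\alpha\Rightarrow^*\eps$ and $\beta\Rightarrow^*\eps$, hence $A\Rightarrow^+ A$, contradicting non-cyclicity. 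Therefore every root-to-leaf path has length at most $|N|\cdot O(|w|^2)$, the height is bounded, and $\mathcal{P}_G(w)$ is finite. Finally, Part~4 is immediate: by Part~3 a grammar without $\eps$-rules or unit-rule cycles is non-cyclic, so by Part~1 each $\mathcal{P}_G(w)$ is finite; a finite set totally ordered by $\prec_G$ is well-ordered, whence $G$ is well-ordered.
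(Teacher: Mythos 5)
Your proof is correct and follows essentially the same route as the paper's own (much terser) argument: cyclic rules can be pumped to give infinitely many parse trees of a fixed string, non-cyclicity bounds tree size via a pigeonhole on (nonterminal, yield-span) pairs, the length argument rules out cycles when there are no $\eps$-rules or unit-rule cycles, and finiteness plus total order gives well-orderedness. You have simply supplied in full the details that the paper leaves as observations.
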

\begin{proof}
    Observe that the only way a given string can have parse trees of unbounded size (and thus infinitely many parse trees) is if $G$ is cyclic. Also, conversely, if all nonterminals are useful, then when we have nonterminals involved in cycles, these nonterminals must appear in some parse trees, and we can repeat these cycles as many times as we want in parse trees, without changing the strings being parsed.
    \changeB{From these observations we obtain (1) and (2).}
    \changeB{Statement (3) follows from the definition of a grammar being cyclic, and (4) follows from (1), (3), and the observation that finite ordered sets are in fact well-ordered.}
\end{proof}

\change{The previous lemma implies that an oCFG in Chomsky normal form is well-ordered. 
Thus, the class of string languages recognized by well-ordered oCFGs, or oCFGs with least parse trees, is equal to the class of context-free languages.} 

\begin{example}
    \label{ex:ordered-arith}%

In this example, we give a well-ordered oCFG for arithmetic expressions, with parenthesis used as usual to indicate precedence. It is also considered how an equivalent grammar could be specified in the popular parser generator ANTLR (see~\cite{antlr}). We allow addition ($+$), subtraction ($-$), multiplication ($*$), division ($\div$) and exponentiation ($\string^$), and the oCFG is constructed in a way to indicate precedence and associativity of these operators in the parse trees. Left associativity (for $+,-,*,\div$) is encoded as $S\rightarrow S\, P \, S\mid x$, $P\rightarrow +\mid -$, and $S\rightarrow S\, T \, S\mid x$, $T \rightarrow *\mid \div$, and right associativity (for $\string^$) as $S\rightarrow x\mid S \,\string^ S$. To reflect precedence in the parse trees, operators with lower precedence are specified first. Putting these observations together, we obtain the following oCFG: $$S\rightarrow S\, P \, S\mid S\, T \, S\mid x \mid (S) \mid  S\,\string^ S,\ \ P\rightarrow +\mid -,\ \ T \rightarrow *\mid \div$$ 

ANTLR can handle (only) direct left recursion by making use of grammar rewriting, and will by default assume that operators are left associative, unless specified otherwise. In contrast to oCFGs, the choice between left and right associativity can not be enforced by making use of the order in which rules are specified, and the order of the placement of a rule having only a terminal (or terminals) in the right-hand side (for example $S\rightarrow x$), has no influence on the parse tree produced. Also, ANTLR assumes that rules are specified in the reverse order as used in oCFGs. Thus, the ANTLR equivalent of this grammar will be:
$$S\rightarrow\ <\!\textrm{assoc=right}\!> S\,\string^ S\mid (S)\mid S\, T\,  S\mid S\, P\, S\mid x\ ,\ \ P\rightarrow -\mid +\ ,\ \ T \rightarrow \div\mid *$$ \eqed
\end{example}
\begin{observation}
    \label{obs:no-simple-arith}%
    \changeB{The arithmetic operator oCFG in Example~\ref{ex:ordered-arith} does produce the correct (to be defined in the motivation below) least parse trees, but no grammar with single nonterminal does. More broadly, having various required combinations of precedence and associativity will still require significant grammar rewriting to produce a correct abstract syntax tree (AST).}
\end{observation}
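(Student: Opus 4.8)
Before seeing the author's proof, here is how I would approach Observation~\ref{obs:no-simple-arith}. The statement splits into two claims: that the grammar of Example~\ref{ex:ordered-arith} yields the correct least parse trees, and that no oCFG with a single nonterminal can. I would prove them separately, investing most of the effort in the impossibility, since its witness is the crux.

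For the impossibility I would exhibit a single pair of strings on which the $\prec_G$-least criterion is forced to contradict standard associativity, no matter how the rules are ordered. Consider the two same-precedence, left-associative operators $+$ and $-$. In any oCFG whose only nonterminal is $S$, the only rule that can place $+$ (resp.\ $-$) as the topmost operator combining two subexpressions is $S\to S+S$ (resp.\ $S\to S-S$): a correct AST for $x+x$ must combine two $S$-derived operands by a rule whose right-hand side carries $+$ at the combining position, and with one nonterminal this fixes the rule shape. These are distinct rules with distinct indices $i_+$ and $i_-$. Now $x+x-x$ has exactly the two parse trees $(x+x)-x$ and $x+(x-x)$, whose sequences $n(t)$ begin with $i_-$ and $i_+$; hence the least tree is the left-associative one if and only if $i_-<i_+$. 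The symmetric string $x-x+x$ forces $i_+<i_-$. The two requirements are incompatible, so some string receives a non-left-associative least tree. I would close this direction by noting that this is precisely the obstruction the Example grammar avoids by routing $+,-$ through $S\to S\,P\,S$ and $*,\div$ through $S\to S\,T\,S$: the operator identity is deferred into $P$ (resp.\ $T$), which a leftmost/pre-order traversal reaches only after the whole left operand, so associativity is decided before the operator is chosen.

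For the correctness of the Example grammar I would argue by induction on expression structure that the $\prec_G$-least tree realizes standard precedence and associativity, using two ingredients. First, precedence: because lower-precedence operators are given smaller rule indices (rule $1$ for the $P$-operators, rule $2$ for the $T$-operators, rule $5$ for $\string^$), at the top-level split the least tree places the lowest-precedence operator at the root, matching the ``bind loosest at the top'' convention. Second, associativity: for $S\to S\,P\,S$ and $S\to S\,T\,S$ the recursive rule index is smaller than that of the base case $S\to x$ (rule $3$), so the least tree prefers to recurse into its left operand, giving left-heavy trees; whereas for $S\to S\,\string^\,S$ (rule $5$) the base case $S\to x$ has the smaller index, so the left operand prefers to be minimal, giving right-heavy trees. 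At each split I would check that any deviation from the intended grouping strictly raises the first differing entry of $n(t)$.

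The step I expect to be the main obstacle is making the correctness argument genuinely global rather than a check of representative cases: one must show that the local index comparisons, composed along a pre-order traversal, single out the standard AST everywhere — in particular that choosing the lower-precedence split at the root never forces a lexicographically smaller but structurally wrong subtree deeper down. On the impossibility side the delicate point is justifying that the $+$/$-$ rules are forced for any single-nonterminal grammar producing correct ASTs, i.e.\ that no exotic one-nonterminal encoding can emulate the deferred-choice trick; I would discharge this by observing that with a single nonterminal there is simply no symbol into which the operator choice can be deferred, so it must be fixed by the root rule itself, which is what drives the contradiction.
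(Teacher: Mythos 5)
Your proposal matches the paper's own argument in both halves: correctness is established by the precedence-by-rule-order observation together with the fact that a least tree can never contain the right-leaning pattern $S_1[\alpha,\beta,S_1[\gamma_1,\gamma_2,\gamma_3]]$ (your comparison of the recursive rule index against the base-case index is exactly an instance of this rotation argument), and impossibility is established by a pair of mirror strings forcing contradictory priorities for the $+$ and $-$ rules (you use $x+x-x$ versus $x-x+x$; the paper uses $x+x-x+x$ versus $x-x+x-x$, concluding the same index clash). The step you flag as delicate --- showing that \emph{every} single-nonterminal grammar is forced into this clash, not just the inlined one --- is left informal in the paper as well, which explicitly states it ``will not provide exhaustively all arguments required,'' so your treatment is, if anything, slightly more explicit on that point.
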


\paragraph{Motivation.}
\changeB{
    Intuitively, we are seeking grammars which produce least trees which do not \emph{misrepresent} the priority and associativity of the operators. More precisely, when replacing the rule $S\rightarrow (S)$ with $S\rightarrow y$, and keeping the other rules as is, we want this new oCFG to produce parse trees reflecting the correct priority and associativity of operators. When comparing the oCFG without the rule $S\rightarrow y$, with the new oCFG having this rule, we regard the terminal $y$ in the new oCFG as representing recursively (note, parenthesized subexpressions might themselves contain more parenthesized subexpressions) the parse tree of a parenthesized expression (when considering smallest parse trees). Also, in the new oCFG, we convert the parse trees to ASTs, by replacing $S[SP[+]\,S]$ with $+[S\, S]$, and similarly for $-,*,\div$, and repeating this replacement on the two inner $S$'s in $+[S\, S]$, and also replacing $S[x]$ with $x$ and $S[y]$ by $y$. Also, we replace the $y$'s inductively by the ASTs of the parenthesized subexpressions they represent.  In these ASTs we now do not allow $+$ or $-$ as the right child of a $+$ or $-$ node, and similar for $*$ and $\div$. We also do not allow \string^ as a left child of a node labelled by \string^. Additionally, we do not want $+$ or $-$ nodes below $*$, $\div$ or \string ^ nodes in the AST, and similarly for $*$ and $\div$ nodes below \string^ nodes.}

    The grammar in Example~\ref{ex:ordered-arith} can be shown to be correct by induction. Observe that a least tree will never contain the subtree pattern $S_1[\alpha,\beta,S_1[\gamma_1,\gamma_2,\gamma_3]]$, for any subtrees $\alpha,\beta,\gamma_1,\gamma_2,\gamma_3$, as the tree $S_1[S_1[\alpha,\beta,\gamma_1],\gamma_2,\gamma_3]$ will necessarily be smaller. This establishes the left-associativity of addition and subtraction, and correct associativity for multiplication, division and  exponentiation can be shown similarly. Precedence is obtained by noting that rules for lower priority operators are specified first, and this ensures that they then appear higher up in the parse trees and ASTs. 

    For the second part, observe the role $P$ and $T$ play in the grammar: they make it possible for operators to have the same precedence. That is, $x+x-x+x$ should be parsed as $((x+x)-x)+x$, treating $+$ and $-$ as interchangeable from a syntactic structure perspective. Simply inlining the operators, as in $S \to S\,\mathord{+}\,S \mid S \,\mathord{-} S \mid \cdots \mid x\mid \cdots$, does not work, as $x+x-x+x$ would produce a least tree 
    describing $(x+(x-x))+x$. Reversing $+$ and $-$, similarly, gives an incorrect tree for $x-x+x-x$. Although this is not the only grammar rewriting to consider, 
    we will not provide exhaustively all arguments required. \eqed
    %
\begin{observation}
    From the last paragraph in the motivation of the previous observation, we see that one needs to be cautious when applying some otherwise natural-seeming grammar rewriting. Specifically, replacing $X \to \gamma Y \delta$ and $Y \to \alpha \mid \beta$, by $X\to \gamma \alpha \delta \mid \gamma \beta \delta$, with $\alpha,\beta\in\Sigma^*$, might not preserve the ordering. More precisely, it is not the case that when taking the smallest parse trees when using the original grammar $X \to \gamma Y \delta$, that one can now replace $Y[\alpha]$ and $Y[\beta]$, by $\alpha$ and $\beta$ respectively, and then obtain the smallest parse trees when using the grammar $X\to \gamma \alpha \delta \mid \gamma \beta \delta$.
\end{observation}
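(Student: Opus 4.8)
The plan is to prove this negative statement by exhibiting a single counterexample, and the natural one is already latent in the motivation: the arithmetic grammar of Example~\ref{ex:ordered-arith} together with the short input $x+x-x$. Here the rewriting in question is instantiated by $X=S$, $\gamma=\delta=S$, $Y=P$, and $Y\to\alpha\mid\beta$ being $P\to{+}\mid{-}$, so that $\alpha={+}$ and $\beta={-}$ are indeed both in $\Sigma^*$. Inlining replaces $S\to SPS$ (together with $P\to{+}\mid{-}$) by the two rules $S\to S\,\mathord{+}\,S\mid S\,\mathord{-}\,S$. Both the original grammar $G$ and the inlined grammar $G'$ have neither $\eps$-rules nor cycles of unit rules, so by Lemma~\ref{lemma:sufficient-minimal}(4) both are well-ordered and least trees are well defined. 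I would then compute the least tree of $x+x-x$ in each grammar and compare, the whole point being that the map prescribed in the statement sends the least tree of $G$ to a tree that is \emph{not} least in $G'$.

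First I would compute the least tree in the original grammar $G$. The crucial structural feature is that in a node labelled $S_1$ (for $S\to SPS$) the index contributed by the embedded $P$-node occurs in the pre-order sequence $n(t)$ \emph{after} the entire left subtree. Hence lexicographic minimisation of $n(t)$ is governed first by the left subtree and only then by the operator choice. Concretely, splitting $x+x-x$ at the $-$ gives a left subtree $x+x$ with $n$-sequence beginning $1,\dots$, whereas splitting at the $+$ gives left subtree $x$ with sequence $3$; since $1<3$ the $-$-split wins, producing the correct left-associative tree $(x+x)\mathord{-}x$, with $n(t)=1,1,3,1,3,2,3$. This is exactly the subtree-pattern argument recorded in the motivation, and one checks against the only competitor $x\mathord{+}(x-x)$, whose sequence $1,3,1,1,3,2,3$ is strictly larger under $\prec_{G}$.

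Next I would compute the least tree in the inlined grammar $G'$, where the operator is now encoded directly in the top rule index ($S_1=S\,\mathord{+}\,S$, $S_2=S\,\mathord{-}\,S$, and $x$ becomes $S_5$). To minimise $n(t)$ one prefers index $1$ at the root, which \emph{forces} the root split to occur at a $+$; for $x+x-x$ this yields $x\mathord{+}(x-x)$ with $n(t)=1,5,2,5,5$, and any split at the $-$ begins with $2$ and is larger under $\prec_{G'}$. Finally I would map the $G$-least tree $(x+x)\mathord{-}x$ to its $G'$-counterpart as prescribed (contract each $P[{+}]$ to ${+}$ and each $P[{-}]$ to ${-}$, relabelling the affected parents to the matching inlined rule): the image is the $G'$-tree for $(x+x)\mathord{-}x$, whose root is $S_2$ and whose sequence $2,1,5,5,5$ begins with $2$. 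Comparing $2,\dots$ with the genuine $G'$-least $1,5,2,5,5$ shows the image is not least in $G'$, which is precisely the claim.

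The main obstacle, and the only place where care is genuinely required, is the bookkeeping of the tree-to-tree map. ``Replacing $Y[\alpha]$ by $\alpha$'' is not mere deletion of the $P$-node: the parent node must be reinterpreted under the new rule numbering (a $+$-parent becomes $S_1$, a $-$-parent becomes $S_2$), and it is exactly this renumbering that flips the root index from the left-subtree-driven value in $G$ to the operator-driven value in $G'$. The remaining work is the routine verification, via the lexicographic order on the pre-order sequences $n(t)$, that the two displayed trees really are the least in their respective grammars; since $x+x-x$ contains none of $*,\div,\string^$ or parentheses, only the rules $S\to SPS$ / $S\,\mathord{+}\,S$ / $S\,\mathord{-}\,S$ and $S\to x$ can apply, so there are only finitely many alternative splits to rule out, and this is mechanical once the comparisons above are spelled out.
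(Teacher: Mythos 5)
Your proposal is correct and takes essentially the same route as the paper, which justifies the observation by pointing back to the motivation paragraph's counterexample of inlining $P\to{+}\mid{-}$ in the arithmetic grammar (the paper uses the witness $x+x-x+x$ yielding $(x+(x-x))+x$, where you use the shorter $x+x-x$). Your explicit computation of the $n(t)$ sequences and the remark about renumbering the parent node under the inlined rules are a more detailed spelling-out of the same argument.
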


The next theorem also appears as Theorem 2 in~\cite{vdMerwe22}, but the proof that follows is significantly more readable and provides more insight, and also specifies the time complexity of deciding if an oCFG is well-ordered. One can regard the argument in the proof as analysing the potential cycles that might appear in the shared parse forests of input strings. If there are no cycles in the parse forest of an input string, then there are only finitely many parse trees for the given string, but if the parse forest contains a cycle that creates smaller trees when followed, there will be an infinite set of decreasing parse trees. Shared packed parse forests are defined and used in Section~\ref{sec:parsing_complexity}, but the proof of the following theorem can be followed without any knowledge about parse forests.
\begin{theorem}
    \label{thm:decide-well-ordered}
    It is decidable, in time $\mathcal{O}(p\,|N|)$, where $p$ is the sum of the lengths of right-hand sides of the productions in $P$, whether an oCFG $G=(N,\Sigma,P,S)$ is well-ordered. 
\end{theorem}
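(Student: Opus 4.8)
The plan is to reduce well-orderedness to the nonexistence of a certain kind of cycle in $G$, and then to detect such a cycle by a graph computation. By Lemma~\ref{lemma:sufficient-minimal}, if $G$ is not cyclic then every $\mathcal{P}_G(w)$ is finite and hence, being finite and totally ordered, well-ordered; so I may assume $G$ is cyclic and must decide whether some $\mathcal{P}_G(w)$ fails to be well-ordered. Since $\prec_G$ is a total order, $\mathcal{P}_G(w)$ is well-ordered exactly when it contains no infinite strictly decreasing sequence, so the whole question becomes whether there exists a string $w$ whose parse trees admit an infinite $\prec_G$-decreasing sequence.

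First I would make the pumping structure precise. A cyclic derivation $A\Rightarrow^+ A$ corresponds to a context $C\in\mathcal{C}_{\Gamma_G}$ whose root is labelled by some $A_i$, whose unique $\Box$-leaf sits at an $A$-position, and all of whose remaining leaves are labelled $\eps$ (the side branches of the cycle are nullable). If $s$ is any complete $A$-rooted subtree with $n(s)=u$, then $C\llbracket s\rrbracket$ is again $A$-rooted with the same yield as $s$, and $n(C\llbracket s\rrbracket)=\alpha\,u\,\beta$, where $\alpha$ (respectively $\beta$) is the preorder index sequence of the nodes of $C$ visited before (respectively after) $\Box$. Iterating inside a fixed surrounding context yields trees $t_k$ for the same $w$ with $n(t_k)=\mu\,\alpha^{k}u\,\beta^{k}\,\nu$, and comparing $t_k$ with $t_{k+1}$ reduces, after the common prefix $\mu\alpha^{k}$, to comparing $u$ with $\alpha u\cdots$. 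I would then show that this family is strictly decreasing precisely when $\alpha$ is lexicographically smaller than $u$ at their first differing position; since that divergence does not depend on $k$, a single decreasing pump already forces an infinite decreasing sequence, so $G$ is not well-ordered. Conversely, using that cyclicity is the only source of infinitely many trees, I would argue that if no cycle is locally decreasing then every strictly decreasing sequence of parse trees is finite, which gives well-orderedness.

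The decision procedure then computes, for each nonterminal, whether a locally decreasing cycle exists. I would first compute the nullable nonterminals and the productive rules in time $O(p)$, and form the nullable-context graph with an edge $A\to B$ whenever some rule $A\to r^A_i$ can be written as $\gamma B\delta$ with $\gamma\delta$ nullable; the cyclic nonterminals are exactly those lying on a cycle of this graph. For each cyclic nonterminal $A$ I would then test the sign condition by comparing the smallest achievable cycle-prefix $\alpha$ against the largest achievable completion $u$ at the positions where they can be made to diverge; this comparison walks along the spine of a candidate cycle and, at each spine node, weighs the index of the cycle's (necessarily nullable) side choices against the index of the largest complete subtree available there. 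Performing this walk for a single nonterminal costs $O(p)$, and ranging over the $\le|N|$ nonterminals gives the claimed $O(p\,|N|)$ bound.

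The main obstacle, and the step I would spend most care on, is the tie case of the comparison: when the root rule of the cycle equals the root rule of the competing subtree, so that $\alpha$ and $u$ share a prefix and the decrease is decided deeper along the spine. This is genuinely possible, since a cycle may fill a nullable side position with a lexicographically small subtree while an alternative completion fills the same position with a larger one, producing an infinite decreasing sequence even when the outermost rule indices agree. Handling this correctly requires comparing the cycle's side-subtree choices with the lexicographically largest legal completions position-by-position, and then proving that it suffices to examine only \emph{simple} cycles (no repeated spine nonterminal) together with a single extremal completion at each branch, so that the search space stays finite and the per-nonterminal cost stays $O(p)$. Establishing this finiteness-of-representatives claim, and verifying that the extremal-completion comparison exactly captures the existence of an infinite decreasing sequence, is where the real work lies; the remaining complexity bookkeeping is routine.
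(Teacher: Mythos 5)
Your overall strategy---reduce well-orderedness to the existence of a ``locally decreasing'' pumpable cycle producing index sequences $\mu\,\alpha^k u\,\beta^k\nu$, and then detect such a cycle graph-theoretically---is sound in outline, but as written the proposal is not a proof. Two essential steps are asserted rather than established. First, the converse direction (if no cycle is locally decreasing then no $\mathcal{P}_G(w)$ contains an infinite descending chain) needs an argument that every infinite descending sequence of parse trees of a fixed $w$ is eventually witnessed by iterating a single cycle; you state this but do not prove it. Second, and more seriously, the entire algorithmic content---that it suffices to examine simple cycles together with one extremal completion per branch, that the resulting position-by-position comparison is sound and complete, and that it costs $O(p)$ per nonterminal---is precisely the part you defer with ``this is where the real work lies.'' A proof that postpones its hardest lemma is a sketch, and without it the claimed $\mathcal{O}(p\,|N|)$ bound is unsupported.

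For comparison, the paper takes a far more direct route and does not engage with your ``tie case'' at all: it claims that $G$ (with all nonterminals useful) is well-ordered if and only if every cyclic rule $A\to r^A_i$ (one with $A\Rightarrow r^A_i\Rightarrow^* A$) is the \emph{last} rule of its production, which reduces the algorithm to computing nullable nonterminals, finding rules on cycles of the induced graph, and checking their indices. Your instinct that the tie case is a genuine difficulty is, however, well founded: the paper's criterion, read literally, appears to misclassify the grammar $A\to a\mid BA$, $B\to\eps\mid b$. Its only cyclic rule $A\to BA$ is last, yet the parse trees $T_0=A_2[B_2[b],A_1[a]]$ and $T_{k+1}=A_2[B_1[\eps],T_k]$ of the string $ba$ satisfy $n(T_k)=(2,1)^k(2,2,1)$, which is strictly decreasing in $k$, so $\mathcal{P}_G(ba)$ has no least element even though every cyclic rule appears last. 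The descent arises exactly because the cycle's root rule coincides with the root rule of the competing completion and the comparison falls to the nullable side branch ($B_1[\eps]$ versus $B_2[b]$)---your tie case. So the situation is: the paper's proof is short because it passes over the case you identify, while your proposal identifies the right difficulty but does not resolve it; completing your finiteness-of-representatives lemma and verifying the per-nonterminal cost is genuinely necessary work, not optional polish.
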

\begin{proof}
\changeB{Since we can determine in time $\mathcal{O}(p\,|N|)$ which nonterminals are useful, and then discard rules involving these, we may assume that all nonterminals in $G$ are useful.} Recall, we refer to a nonterminal in $G$ as being cyclic if $A\Rightarrow^+\!\!A$. Also, we define a rule $A\rightarrow r$ to be cyclic if $A\Rightarrow r\Rightarrow^*\!\!A$. Now, observe that $G$ is well-ordered if and only if all cyclic rules have the highest possible index (i.e.\ appear last) in the production in which they occur, i.e. if $A\rightarrow r^A_1\mid\ldots\mid r^A_{n_A}$, then there is at most one possible cyclic rule amongst the rules $A\rightarrow r^A_i$, and if there is one, it is the rule $A\rightarrow r^A_{n_A}$. 
To see this, first note that if there are no cyclic rules, then $G$ is well-ordered, since then all strings will have only finitely many parse trees. Also, if all cyclic rules appear last, i.e.\ as $A\rightarrow r^A_{n_A}$, then smaller trees are obtained when removing these cycles, and there are only finitely many parse trees, when not using cycles. Next, note that if we have a cyclic rule $A\rightarrow r^A_i$, with $i<n_A$, then $G$ is not well-ordered. This follows from a pumping argument: observe that some parse tree containing a node labelled $A_{n_{\!A}}$\! (at least one exists as $A$ is useful) can in that case be modified into a smaller (under $\prec_G$) parse tree by instead applying the cyclic rule $A\rightarrow r^A_i$ in that position, rather than using a rule from the production for $A$ with a larger index. We then use the cyclic derivation to produce a new $A_{n_{\!A}}$\! lower down in the tree. Iterating this process gives rise to an infinite sequence of smaller trees, violating well-orderedness.

Thus, we can decide whether $G$ is well-ordered by:
\begin{itemize}
    \item[(i)] Computing the nullable nonterminals; the nonterminals in the smallest set $M\subseteq N^*$, such that $A\in M$ if and only if there is a rule $A \to r$ with $r\in M^*$ (i.e.\ the Kleene closure of $M$), and as base case to this inductive definition, we use $M=\emptyset$ and $\eps\in M^*$;
    \item[(ii)] Computing the set of cyclic rules; search rules participating in cycles in the graph induced by having an edge from $A\in N$ to $B\in N$ if there is a rule $A \to r$ with $r=r'Br''$ for $r',r'' \in M^*$;
    \item[(iii)] Checking that cyclic rules only occur last in their respective productions.
\end{itemize}
Suitably implemented, each of these three steps can be done in time $\mathcal{O}(p\,|N|)$, where $p$ is the sum of the lengths of right-hand sides in $P$.
\end{proof}


We conclude this section by providing a bound on the length of derivations producing least oCFG trees, assuming no $\eps$-rules. 
First, we recall a related result for CFGs.
\begin{theorem}[Thm.~1 in~\cite{cfg-deriv-complexity}]
    \label{thm:length-finite}%
    For a CFG $G=(N,\Sigma,P,S)$ with no $\eps$-rules, the length of a \emph{shortest} CFG derivation for $w\in \mathcal{L}(G)$ is at most $(2|w|-1)|N|$.
\end{theorem}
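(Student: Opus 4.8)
The plan is to pass from derivations to parse trees and count nodes. Since $G$ has no $\eps$-rules, a parse tree of $w$ has every leaf carrying a terminal, and the number of steps in any derivation equals the number of internal (nonterminal-labelled) nodes of its parse tree; a \emph{shortest} derivation therefore corresponds to a parse tree $t$ with the fewest internal nodes. Writing $n=|w|$, the tree $t$ has exactly $n$ leaves. First I would classify each internal node as \emph{branching} (at least two children), \emph{pre-terminal} (a single child that is a terminal leaf), or \emph{unit} (a single child that is a nonterminal). A standard tree inequality bounds the branching nodes by $n-1$, and since distinct pre-terminal nodes consume distinct terminal leaves, there are at most $n$ of them; hence the number of non-unit internal nodes is at most $2n-1$.

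It remains to bound the unit nodes, and here the key idea is a contraction (pumping) argument exploiting minimality. I would partition the internal nodes into \emph{vertical groups}: for each non-unit internal node $u$, its group consists of $u$ together with the maximal chain of unit nodes lying directly above it. Every internal node lies in exactly one such group, so the number of groups equals the number of non-unit internal nodes, which is at most $2n-1$. Within a single group the nodes form a chain $v_1\to\cdots\to v_m\to u$ in which each of $v_1,\ldots,v_m$ has a single child, so all of $v_1,\ldots,v_m,u$ share the same yield. If two of these nodes carried the same nonterminal label, I could splice the lower subtree into the higher position, obtaining a parse tree of $w$ with strictly fewer nodes and contradicting minimality; thus the labels along the group are pairwise distinct and each group has at most $|N|$ nodes.

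Combining the two counts gives at most $2n-1$ groups of at most $|N|$ nodes each, so the shortest derivation has length at most $(2|w|-1)|N|$, as claimed. I expect the main obstacle to be organising the count so the constant comes out exactly right rather than as $(2n-1)(|N|+1)$: bounding unit chains and non-unit nodes \emph{separately} overcounts, whereas folding each non-unit node into the chain immediately above it both absorbs the extra factor and makes the ``same yield, hence contractible'' invariant apply uniformly across the whole vertical group. A secondary point to verify carefully is that the contraction really yields a valid parse tree for the same $w$ (the spliced subtree must keep the nonterminal label demanded by its new parent), which is exactly what the common-label, common-yield property guarantees.
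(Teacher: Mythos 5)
Your proof is correct. Note that the paper itself gives no proof of this theorem --- it is imported verbatim as Theorem~1 of the cited reference --- but the remark in the proof of Corollary~\ref{cor:length-finite} (``the bound is achieved by eliminating cycles'') indicates that the cited argument rests on exactly the mechanism you use: a repeated nonterminal on a yield-preserving chain can be contracted, contradicting minimality. Your accounting (at most $n-1$ branching nodes plus at most $n$ pre-terminal nodes, then folding each unit chain together with its non-unit anchor into a single group of at most $|N|$ pairwise-distinctly-labelled nodes) is the standard way to land exactly on $(2|w|-1)|N|$, and your final paragraph correctly identifies the one point that needs care, namely that the spliced subtree keeps both the label and the yield required at its new position.
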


\begin{corollary}
    \label{cor:length-finite} Let $G$ be an oCFG without $\eps$-rules. Then the bound in Theorem~\ref{thm:length-finite} also holds for a CFG derivation of a \emph{least tree} in $G$ (if a least tree exists for the string $w$). 
\end{corollary}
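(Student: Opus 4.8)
The plan is to show that the least tree is \emph{cycle-free} and then to observe that the counting argument underlying Theorem~\ref{thm:length-finite} bounds the size of \emph{any} cycle-free parse tree, not only that of a shortest derivation. Say that a node $v$ is a \emph{yield-preserving repeat} of a proper ancestor $u$ if $u$ and $v$ carry the same nonterminal (ignoring subscripts) and govern the same substring of $w$, and call a parse tree cycle-free if it has no such pair. Since $G$ has no $\eps$-rules, every nonterminal derives a nonempty string, so any material branching off the path from $u$ to $v$ would contribute at least one terminal to the yield; hence a yield-preserving repeat forces the whole path from $u$ to $v$ to consist of unit rules. Consequently, for $\eps$-free grammars cycle-freeness is exactly the absence of a repeated nonterminal along any maximal unit chain, so each unit chain in a cycle-free tree has at most $|N|$ nodes. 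Combined with the standard fact that a unit-free, $\eps$-free parse tree for $w$ has at most $2|w|-1$ internal nodes, a cycle-free tree has at most $(2|w|-1)|N|$ internal nodes, i.e.\ a CFG derivation of length at most $(2|w|-1)|N|$.

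It therefore suffices to prove that the $\prec_G$-least tree $t$ of $\mathcal{P}_G(w)$, assumed to exist, is cycle-free. I would argue by contradiction. Suppose $t$ has a yield-preserving repeat; then some unit chain visits a nonterminal $A$ at a node $u$ and at a proper descendant $v$ governing the same substring. Let $A\to r^A_i$ be the rule applied at $u$ and $A\to r^A_j$ the rule applied at $v$, and first suppose $i\ne j$. If $j<i$, I would \emph{contract}: replace the subtree at $u$ by the subtree at $v$. The result $t'$ is again a parse tree of $w$ (the grafted subtree has root nonterminal $A$ and the same yield), and $n(t')$ agrees with $n(t)$ up to the position of $u$, where it carries $j$ instead of $i$; hence $t'\prec_G t$. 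If $i<j$, I would instead \emph{expand}: replace the subtree at $v$ by a copy of the subtree at $u$. Now $n(t')$ agrees with $n(t)$ up to the position of $v$, where it carries $i$ instead of $j$, so again $t'\prec_G t$. Either way the minimality of $t$ is contradicted.

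The main obstacle is the case $i=j$, where the two occurrences of $A$ begin with the same (necessarily unit) rule, so the modifications above produce equal labels at the branching position and give no immediate gain. Here I would pass to the children of $u$ and $v$ on the chain: equal rule choices propagate the repeat one step further down with the same nonterminal, and I would iterate. Because the unit chain is finite, this cannot continue forever: the lower occurrence must reach the bottom of the chain and apply a non-unit rule while the upper occurrence is still continuing with a unit rule, so at that first point of divergence the two rule indices differ while the nonterminal and governed substring still agree. Applying the contract-or-expand step there produces a strictly $\prec_G$-smaller parse tree of $w$, contradicting the leastness of $t$. Hence $t$ is cycle-free, and the bound of Theorem~\ref{thm:length-finite} follows.
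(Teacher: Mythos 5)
Your proof is correct and follows essentially the same route as the paper's: show that a least tree cannot contain a cycle by exhibiting either a contracted or a pumped tree that is strictly smaller under $\prec_G$, then invoke the counting argument behind Theorem~\ref{thm:length-finite} for cycle-free trees. You supply more detail than the paper does --- in particular the case analysis on the rule indices at the first point of divergence along the unit chain, which justifies the paper's bare assertion that one of the two modified trees must be smaller.
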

\begin{proof}
    Refer to the proof in~\cite{cfg-deriv-complexity}, and observe that the bound is achieved by eliminating cycles. 
    To see that the result also  applies to \emph{all} oCFGs with no $\eps$-rules, observe that if a least parse tree exists, it cannot ``contain'' a cycle. That is, the least parse tree cannot be such that $t=c\llbracket c'\llbracket c''\rrbracket\rrbracket$, where (i) $c'\not=\Box$, (ii) $c'$ and $c''$ have the same root label, and (iii) $c\llbracket c''\rrbracket$ is also a parse tree for the same string, since then, either $c\llbracket c''\rrbracket$ or $c\llbracket c' \llbracket c' \llbracket c''\rrbracket\rrbracket\rrbracket$ must be smaller. If $c\llbracket c' \llbracket c' \llbracket c''\rrbracket\rrbracket\rrbracket$ is smaller, then we can keep on repeating the context $c'$, and in this way, each time obtain a smaller tree. 
\end{proof}



\begin{remark}\label{rem:height}
If we allow $\eps$-rules in the previous theorem and corollary in the grammar $G$, then we need to 
replace the length of the derivation by the height of a parse tree obtained from a shortest derivation, and also replace the bound $(2|w|-1)|N|$ by:
\begin{equation}\label{bound}
\max\{(2|w|-1)|N|+|N|,|N|\}=\max\{(2|w||N|,|N|\}
\end{equation}
To see this, first note that we may assume that we consider parse trees obtained from leftmost derivations of a CFG, not having any cycles in the derivation. Also, it is enough to obtain a result similar to Theorem~\ref{thm:length-finite}, since from this theorem, we obtain the corresponding corollary. Next, note that a nonterminal is not repeated in any node to leaf path in a parse tree (obtained from a shortest derivation), from a nonterminal deriving $\eps$.  Thus, in particular, the bound given in (\ref{bound}) holds when $w=\eps$. Next, let $G'$ be the grammar obtained from $G$ by applying $\eps$-rule removal (to $G$) in the standard way, i.e.\ we replace a rule of the form $A\rightarrow r$ by all possible rules $A\rightarrow r'$, where $r'$ is obtained from $r$ by deleting some (or none) of the nonterminals in $r$ from which $\eps$ can derived (and we also remove all $\eps$-rules).
Now, consider a parse tree $t$ for a string $w\not=\eps$, when using $G$, 
and remove from $t$ all subtrees deriving $\eps$, to obtain a tree $t'$. 
Thus, $t'$ is a parse tree for $w$ when using $G'$. 
Now use the previous theorem on $G'$ and $w$ (note $G$ and $G'$ have the same number of nonterminals). We obtain the bound in (\ref{bound}) by noting that the length bound on a derivation (in Theorem~\ref{thm:length-finite} applied to $G'$) is a height bound on the corresponding parse tree $t'$ (which gives us the height bound $(2|w|-1)|N|$ on $t'$), and then we add back the subtrees deriving $\eps$ to $t'$ to obtain $t$. Thus, we obtain the height bound $(2|w|-1)|N|+|N|$ for $t$ by adding $|N|$ to the height bound for $t'$. \eqed
\end{remark}

\section{Parsing expression grammars}
\label{sec:pegs}

In this section, we formally introduce parsing expression grammars, following~\cite{ford-pegs}, but restricting what we allow as parsing expressions, and also assuming that the nonterminal $S$ is the starting expression, instead of making use of a general parsing expression as starting expression.


\begin{definition}[Parsing expressions]
A \emph{parsing expression} is a string of the form $e_1/e_2/\ldots/e_n$, with $n\ge 1$, and $e_i\in(N\cup\Sigma)^*$, where $N$ and $\Sigma$ are finite sets of nonterminal and terminal symbols respectively. 
\end{definition}

We refer to ``$/$'' as the prioritized choice operator. 
The set of parsing expressions over $N$ and $\Sigma$ is denoted by $\mathcal{PE}(N,\Sigma)$.

\begin{definition}[Parsing expression grammars]
A \emph{parsing expression grammar} (PEG) is a tuple $G = (N,\Sigma,P,S)$, where $N$ and $\Sigma$ are finite sets of nonterminal and terminal symbols respectively,
$P(A)=(e^A_1,\ldots,e^A_{n_A})$, with $e^A_i\in(N\cup\Sigma)^*$,
is the production function, and $S$ the starting expression. 
\end{definition}

We write $A\rightarrow e^A_1/\ldots/e^A_{n_A}$, if $P(A)=(e^A_1,\ldots,e^A_{n_A})$, i.e.\ we interpret $P$ as being a function from $N$ to
$\mathcal{PE}(N,\Sigma)$.  
\changeB{Note, we do not use the typical convention for PEGs, where $A\leftarrow e^A_1/\ldots/e^A_{n_A}$ denotes $P(A)=(e^A_1,\ldots,e^A_{n_A})$.}
As in the case of oCFGs, if we have $A\rightarrow e^A_1/\ldots/e^A_n$, we refer to $A\rightarrow e^A_i$ and $A\rightarrow e^A_1/\ldots/e^A_n$ as a rule and production respectively. 

\begin{definition}[Matching semantics of PEGs]
For a PEG $G = (N,\Sigma,R,S)$, we define a function $\rightsquigarrow_G:\mathcal{PE}(G)\times \Sigma^*\rightarrow\Sigma^*\cup\{f\}$, where $f\not\in (N\cup\Sigma)$ denotes failure $(\rightsquigarrow_G$ will also be used as an infix operator). If $(e,x)\rightsquigarrow_G y$, with $y\in\Sigma^*$, then parsing succeeds by parsing the prefix $y$ of $x$, while if $(e,x)\rightsquigarrow_G f$, then parsing fails. For $a,b\in\Sigma, a\not=b$, with $e,e_1,e_2\in\mathcal{PE}(G)$, and $x,x_1,x_2,y\in\Sigma^*$, and $o\in\Sigma^*\cup\{f\}$, we define $\rightsquigarrow_G$ inductively as follows.
\begin{itemize}
    \item Empty rule: $(\eps,x)\rightsquigarrow_G\eps$;
    \item Terminal success: $(a,ax)\rightsquigarrow_G a$;
    \item Terminal failure: $(a,bx)\rightsquigarrow_G f$;
    \item Nonterminal rule: if $A\rightarrow e$ and $(e,x)\rightsquigarrow_G
o$, then $(A,x)\rightsquigarrow_G o$;
    \item Sequence success: if $(e_1,x_1 x_2 y)\rightsquigarrow_G x_1$  and $(e_2,x_2y) \rightsquigarrow_G x_2$, then\newline $(e_1e_2,x_1x_2y) \rightsquigarrow_G x_1x_2$;
    \item Sequence failure: if $(e_1, x_1x_2) \rightsquigarrow_G f$, or $(e_1, x_1x_2) \rightsquigarrow_G x_1$ and $(e_2,x_2)\rightsquigarrow_Gf$, then $(e_1e_2,x_1x_2) \rightsquigarrow_G f$;
    \item Alternation case 1: if $(e_1,xy)\rightsquigarrow_G x$, then $(e_1/e_2,xy) \rightsquigarrow_G x$;
    \item Alternation case 2: if $(e_1,x) \rightsquigarrow_G f$ and $(e_2,x) \rightsquigarrow_G o$, then $(e_1/e_2,x) \rightsquigarrow_G o$.
\end{itemize}
\end{definition}

Next, we translate $(S,w)\rightsquigarrow_G w'$ into a deterministic derivation, using derivation steps denoted by $\Rrightarrow_w$, similar to $\Rightarrow$, in the case of CFGs. All derivation steps $\Rrightarrow_w$ for PEGs will also be derivation steps $\Rightarrow$ for the corresponding CFG (where we change prioritized choice, i.e. ``$/$'', into non-deterministic choice, i.e. ``$|$'', to go from a PEG to the corresponding CFG), but not necessarily conversely. We have that  $(S,w)\rightsquigarrow_G w'$ if and only if $S\Rrightarrow_w^+ w'$ ($\Rrightarrow_w^+$ denotes the transitive closure of $\Rrightarrow_w$), and $S\Rrightarrow_w^+ w'$ implies $S\Rightarrow^+ w'$.

\begin{definition}[Derivations, languages  and parse trees defined by PEGs]\label{def:peg}
A \emph{PEG derivation step} $u\Rrightarrow_w v$ (for a PEG $G$), w.r.t. $w\in\Sigma^*$, denotes that it is possible to 
use a rule $A\rightarrow r^A_i$ (in $G$) from the production $A\rightarrow r^A_1/\ldots/r^A_{n_A}$, to replace the left-most nonterminal in the string $u\in(N\cup\Sigma)^*$ (which thus must be an $A$), by $r^A_i$, to produce the string $v$. Also, if $u'$ is the prefix of $u$ in $\Sigma^*$ to the left of $A$, we require that $(u'r_i^A,w)\rightsquigarrow_G u'v'$, for some $v'\in\Sigma^*$, i.e.\ $(u'r_i^A,w)\not\rightsquigarrow_G f$. Additionally (in contrast to left-most derivation steps $u\Rightarrow v$ in CFGs), we require that $(u'r_j^A,w)\rightsquigarrow_G f$, for $j<i$. If $S\Rrightarrow_w^+ w'$, with $w'\in\Sigma^*$, for some $w$, then $w'$ is in the \emph{language defined by $G$}, and the (left-most) rule applications in the steps of this derivation (in order), is  used to construct a \emph{parse tree for $w$}.
\end{definition}

Note that if $S\Rrightarrow_w^+ w'$, then $w'$ is a prefix of $w$, and $S\Rrightarrow_{w''}^+ w'$ for any $w''$ that contains $w'$ as prefix and is a prefix of $w$, in particular, $S\Rrightarrow_{w'}^+ w'$.

\begin{example}\label{ex:a5}
In this example, we consider the PEG discussed in the introduction with production $S\rightarrow aSa/a$, and show that $a^5$ is not accepted. 


We use Definition~\ref{def:peg}. To see that $S\Rrightarrow_{a^5} aSa\Rrightarrow_{a^5} a^3$, we need to show that $( aaSa,a^5)\rightsquigarrow_G f$, which is the case since $(a^2S,a^5)\rightsquigarrow_G a^5$. This follows by verifying that $(S,a^3)\rightsquigarrow_G a^3$.
\eqed
\end{example}

\section{oCFG derivations}\label{sec:derivations}

We define in this section oCFG derivations and also show the close relationship between PEG and oCFG derivations. 
We obtain oCFG derivations by reformulating left-most CFG derivations to be deterministic by selecting the first rule choice, from a given production (for a given nonterminal), in the order they are specified in the oCFG, that will ensure a successful derivation.  
In our setting, derivations will be left-most, but by definition also deterministic, in contrast to how CFG derivations are typically defined.
\changeB{Also, only strings with smallest trees will have finite derivations. This can be seen by using the definition of a derivation, and also from the definition a smallest parse tree.}

Derivations will be done in one of two modes: prefix mode, where parsing a prefix of the input string is regarded as a success, and full mode, where the complete input string must be parsed. 
The situation is similar to typical PCRE-style regular expression matchers, where the matcher can either be forced to determine if a full match is possible, or be asked to return the first prefix match.

Strictly speaking, we should use a symbol other than `$\Rightarrow$' in oCFG derivations, to distinguish between CFG and oCFG derivations, and we should also indicate, for which string a derivation is computed, just as the notation `$\Rrightarrow_w$' used for PEG derivations, but to keep our notation simple, we will still use `$\Rightarrow$' in oCFG derivations.

PEGs with left recursion lead to infinite derivations, without producing a parse tree, in contrast to non-cyclic oCFGs.
This is for example the case with the PEG having the production $S\rightarrow Sa / a$. But in contrast, in non-cyclic oCFGs we have finite derivations.
Various ways of extending PEGs to support left-recursion have been proposed, for example in~\cite{LeftRecursion}, which is used in the Pegen implementation~\cite{pegen}, but these approaches often lead to unexpected parsing results in corner cases, as is pointed out in the section on related work in~\cite{LeftRecursion1}.  The time complexity of parsing also becomes quadratic in the length of the input string being parsed.

Next, we discuss distinctions between parsing with oCFGs, in contrast to when parsing with PEGs. For PEGs, we can memoize the value $\false$, for pairs $(A,i)$, with $A$ a nonterminal and $i$ a position in the input string, if parsing a prefix of the remainder of the input string from $i$, with $A$, is not possible, and recomputing this, is never necessary. Also, for PEGs, if $t_{i,A}$ is the parse tree when using $A$ as root, and starting at a position $i$ in the input string, then if $t_{0,S}$ makes use of $A$ at position $i$, then $t_{0,S}$ will have $t_{i,A}$ as substree, and this subtree will be a parse tree for a prefix of the string starting at position $i$. Thus, for PEGs, we can also memoize parsing related to successful parsing starting from a given position in the input string with a given nonterminal. These memoization observations are not applicable to oCFGs, and they are the main reason why parsing with PEGs (when not having left recursion), can be done in linear time, in contrast to when parsing with oCFGs. Conceptually, we can regard PEGs as ignoring the overall sentential form when making rule selections during derivation steps, and only focussing on producing locally successful parse trees, when starting from a given position with a given nonterminal, with preference given to later subderivations being locally successful.

Next, we note, as one would expect, that oCFG derivations produce least parse trees. 

\begin{theorem} \changeB{The rules in a derivation of a string $w$ with a least tree over an oCFGs $G$, applied in order, in a left-most way, produce the least parse tree of $w$.}
\end{theorem}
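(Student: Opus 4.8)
The plan is to exploit the tight correspondence, recorded just after the definition of parse trees, between left-most derivations and parse trees: a left-most derivation of $w$ rewrites internal nodes in exactly the order of a pre-order traversal, so the sequence of rule indices chosen along the derivation is precisely $n(t)$ for the resulting tree $t$. Under this correspondence, comparing trees by $\prec_G$ is identical to comparing derivations by the lexicographic order on their sequences of chosen indices. The oCFG derivation is, by definition, greedy: at each step it rewrites the left-most nonterminal using the smallest rule index for which the partial derivation can still be completed to a full left-most derivation of $w$. The goal is to show that this greedy process reconstructs, step by step, the least tree $t^*$ of $w$, whose existence is guaranteed by hypothesis.

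First I would prove, by induction on the number of derivation steps, that the greedy derivation agrees with $t^*$ at every step. Suppose the first $k$ steps of the greedy derivation coincide with the first $k$ rule applications read off from $t^*$ in pre-order; in particular the sentential forms agree and the same left-most nonterminal $A$ is about to be rewritten. Let $j$ be the index used by $t^*$ at this position and let $i$ be the index the greedy rule selects. Since $t^*$ is itself a completion of the common length-$k$ prefix that yields $w$, the index $j$ is completable, so by minimality of the greedy choice we have $i \le j$.

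The heart of the argument is to rule out $i < j$. If $i < j$, then by definition of the greedy step there is a completion of the common prefix that rewrites $A$ with $r^A_i$ and still yields $w$; call the resulting parse tree $t'$. By the derivation--tree correspondence, $n(t')$ and $n(t^*)$ share their first $k$ entries, while their $(k{+}1)$-st entries are $i$ and $j$ respectively with $i<j$. Hence $n(t')$ is lexicographically smaller than $n(t^*)$, i.e.\ $t' \prec_G t^*$, contradicting that $t^*$ is the least tree in $\mathcal{P}_G(w)$. Therefore $i = j$, the induction goes through, and since $t^*$ is finite the greedy derivation terminates having made exactly the choices recorded in $t^*$, so it produces $t^*$.

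The step I expect to be the main obstacle is making the correspondence between a ``completable greedy prefix'' and a ``parse tree of $w$'' fully rigorous: one must verify that a length-$k$ prefix of a left-most derivation that can be completed to yield $w$ corresponds exactly to the first $k$ pre-order indices of some $t \in \mathcal{P}_G(w)$, and conversely, so that the exchange in the previous paragraph genuinely produces a bona fide parse tree $t'$ with the claimed index sequence. This hinges on the uniqueness of the derivation-to-tree encoding recorded after the parse-tree definition, and on the fact that the relevant derivations are finite, which in turn follows because $w$ is assumed to possess a (finite) least tree.
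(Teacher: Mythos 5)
Your argument is correct and is essentially the paper's own proof: the paper simply asserts that the claim ``follows directly from the definition of derivations in oCFGs,'' and your greedy/exchange induction is exactly the detail that assertion leaves implicit. The key points you identify --- the pre-order correspondence between $n(t)$ and left-most rule choices, the exchange step ruling out a strictly smaller completable index via minimality of $t^*$, and finiteness of the derivation following from finiteness of $t^*$ --- are the right ones and are handled correctly.
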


\begin{proof}
The result follows directly from the definition of derivations in oCFGs.
\end{proof}

In the next section, we consider the complexity of determining  an oCFG derivation of a complete input string, by making use of the shared packed parse forest for the input string. 

\section{oCFG parsing with shared packed parse forests}\label{sec:parsing_complexity}


In this section, we show how to use shared packed parse forests (SPPFs) to compute oCFG derivations. 
First, we argue why considering  only the case where the complete input string is parsed, is sufficient to also handle parsing in prefix mode. To turn prefix mode into a special case of parsing the complete input string, we note that to simulate prefix mode with full mode, we simply add a new start nonterminal $S'$ with a rule $S'\rightarrow SA$, where $S$ is the old start nonterminal, and $A$ a new nonterminal not used elsewhere in the oCFG productions, and for $A$ we add a production to ensure that $A$ can parse any length input string.

In terms of our presentation of SPPFs, we follow~\cite{bsr} closely.
An SPPF encodes all parse trees of a string $w$, derived from a CFG $G$, in a graph $P$, with the root node labelled by $(S,0,|w|)$, the number of nodes in $P$ at worst cubic in $|w|$, and the height of a path not following cycles, bounded by $O(|w|)$, with the constant determined by $G$. 

To define the SPPF $P$ for a string $w$, derived from the CFG $G$, we first introduce indexed binary derivation trees.
An indexed binary derivation tree (BDT) is constructed from a derivation tree by first introducing intermediate nodes, so that the tree is binarised from the right. Thus, when a node in a parse tree has more than two children, we keep the leftmost child as is, but concatenate the labels of all the other children, to obtain the label of the new right child. 
In contrast to how BDTs (and SPPFs) are typically presented, we binarise from the right, instead of from the left, since this corresponds more closely to how top-down, left-to-right parsing works.
As is usually the case, we add to the labels of nodes, in the BDT,  two integers, $i$ and $j$, which are the left and right positions, in $w$, of the substring at their leaves. Also, if $(x\alpha,i,j)$ is the label of a node $n$ in a BDT, with $|x\alpha|\ge 2$, where $x\in(N\cup\Sigma)$, then the left child of $n$ is labelled by $(x,i,k)$, and the right child by $(\alpha,k,j)$.   
Consider for example the CFG with rules, $S\rightarrow BAa\mid bAa,\ A\rightarrow a,\ B\rightarrow b$, and the input string $baa$. Then we have two BDTs, one in which the root node, labelled by $(S,0,3)$, has a left child $(B,0,1)$, and a right child $(Aa,1,3)$. In the other BDT, $(S,0,3)$ has left child $(b,0,1)$, and right child $(Aa,1,3)$. Nodes in the BDT labelled by $(X,i,j)$, with $X\in(N\cup\Sigma\cup\varepsilon)$, will be referred to as symbol nodes, and those labelled by $(\alpha,i,j)$, with $\alpha\in(N\cup\Sigma)^*$, where $|\alpha|\ge 2$, as intermediate nodes. 

A binarised SPPF is obtained from the set of indexed BDTs for $w$, by taking all nodes from the BDTs of $w$, identifying nodes with the same label, and by adding packed nodes. 
Non-leaf symbols nodes and intermediate nodes have one or more packed node children.
A symbol node $(X,i,j)$, with $X$ a nonterminal, has a rule-packed child $(X\rightarrow x\beta,i,k,j)$, with $x\in (\Sigma\cup N)$ and $\beta\in(\Sigma\cup N)^*$, if:  
\begin{itemize}
\item[(i)] $X\rightarrow x\beta$ is a rule in $G$;
\item[(ii)] There is a symbol node labelled by $(x,i,k)$;
\item[(iii)] Either $\beta\not=\eps$ and there is a symbol or intermediate node labelled $(\beta,k,j)$, or 
$\beta=\eps$ and $k=j$. 
\end{itemize}

The nodes $(x,i,k)$, and $(\beta,k,j)$, if $\beta\not=\eps$, are the children of $(X\rightarrow x\beta,i,k,j)$.
An intermediate node $(x\beta,i,j)$ with $x\in (\Sigma\cup N)$ and $\beta\in(\Sigma\cup N)^*$, where $|\beta|\ge 1$, has an intermediate packed node child labelled $(x\beta,i,k,j)$, if there are nodes labelled $(x,i,k)$ and $(\beta,k,j)$, which are then the children of the intermediate packed node. 

In the example grammar $S\rightarrow BAa\mid bAa,\ A\rightarrow a,\ B\rightarrow b$, with input string $baa$, mentioned above, the root node $(S,0,3)$ in the SPPF, has  two rule-packed nodes, namely $(S\rightarrow BAa, 0,1,3)$ and also $(S\rightarrow bAa,0,1,3)$. The node $(S\rightarrow BAa, 0,1,3)$ has children $(B,0,1)$ and $(Aa,1,3)$, and $(S\rightarrow bAa,0,1,3)$ has a left child $(b,0,1)$, and share its right child, $(Aa,1,3)$, with the node $(S\rightarrow BAa, 0,1,3)$.

The SPPF for the input string $w$, can be constructed in time $O(|w|^3)$, using for example a generalized LL parsing algorithm. Also, the packed nodes on their own uniquely determine the symbol and intermediate nodes and if we only keep them, we have what is known as the binary-subtree representation (BSR) of a SPPF. 

The string $w$ has infinitely many parse trees, precisely when the SPPF has a cycle. 
When no cycle is present, each selection choice of rule-packed nodes, where a unique rule-packed node is selected from all rule packed node children  of a given non-leaf symbol node, and all selected nodes are reachable from the root node, after removing those not selected,  corresponds to a unique parse tree. Once we have made such a selection of rule-packed nodes, we obtain a parse tree, in which the selected rule-packed nodes, arranged in the order obtained by doing a pre-order traversal of the SPPF, provide the rules used in the left-most derivation of the parse tree described by SPPF with selected rule-packed nodes.

If we interpret the grammar $S\rightarrow BAa\mid bAa,\ A\rightarrow a,\ B\rightarrow b$, as an oCFG, then the parse tree for the input string $baa$ is obtained by selecting the rule-packed node $(S\rightarrow BAa, 0,1,3)$ from the SPPF, and discarding $(S\rightarrow bAa,0,1,3)$.

\begin{example}
Consider the CFG $S\rightarrow SS\mid b$, with $bbb$ as input. Then the root node of the SPPF, is $(S,0,3)$, and this node has the rule-packed nodes $(S\rightarrow SS,0,1,3)$ and $(S\rightarrow SS,0,2,3)$ as children, which reflect the fact that we have two parse trees for the input string $bbb$. Note, in this case, when interpreting the CFG as an oCFG, it is not immediately clear that the rule-packed node $(S\rightarrow SS,0,2,3)$ should be selected, and $(S\rightarrow SS,0,1,3)$ discarded, in order to obtain the oCFG parse tree $S_1[S_1[S_2[b], S_2[b]],S_2[b]]$, for $bbb$, from the parse forest.

Next, consider the SPPF for $S\rightarrow SS\mid b\mid\eps$, and input $b$. In this case, the root node $(S,0,1)$ has 
$(S\rightarrow SS,0,0,1)$, $(S\rightarrow SS,0,1,1)$ and $(S\rightarrow b,0,1,1)$, as packed node children. Note that $(S\rightarrow SS,0,0,1)$ has the symbol node children $(S,0,0)$ and $(S,0,1)$, and thus in this case we have a cycle in the SPPF, since $(S,0,0)$ has children $(S\rightarrow SS,0,0,0)$ and $(S\rightarrow\eps,0,0,0)$, and $(S\rightarrow SS,0,0,0)$ has two edges back to $(S,0,0)$. This reflects the fact that $b$ has infinitely many parse trees.
\eqed
\end{example}

\begin{theorem}\label{thm:parsing-complexity}
Assume $G$ is an oCFG. Then an oCFG derivation for a string $w$ can be computed in time $O(|w|^4)$. 
\end{theorem}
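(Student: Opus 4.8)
The plan is to build the SPPF $P$ for $w$ and then extract from it the least parse tree under $\prec_G$ by a dynamic program over the nodes of $P$, selecting at each symbol node the rule-packed child that yields the lexicographically smallest number sequence. Since the full input must be parsed, the root we seek is $(S,0,|w|)$; if it is absent then $w\notin\mathcal{L}(G)$ and we report failure. Building $P$ costs $O(|w|^3)$, so it remains to bound the extraction. For a symbol node $v=(A,i,j)$ let $L(v)$ denote the lexicographically least value of $n(t)$ over all encoded parse subtrees $t$ of $w[i..j]$ rooted at $A$ (and analogously for intermediate nodes, which contribute no leading index). I want to compute $L((S,0,|w|))$, which by the definition of $\prec_G$ is exactly the number sequence of the least tree of $w$, from which the oCFG derivation is read off by a pre-order traversal.

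The recursion is: at $(A,i,j)$ pick the smallest rule index $m$ occurring among its rule-packed children; this is always optimal because $m$ is the first symbol of $n(t)$ and, since $P$ contains only nodes participating in a complete parse, every rule-packed child can be completed to a parse of $w[i..j]$. Then, among the rule-packed children $(A\to x\beta,i,k,j)$ using rule $m$, pick the split $k$ minimizing $L((x,i,k))\,L((\beta,k,j))$, where a terminal first child contributes the empty sequence. Correctness of treating the two sides of a split independently rests on the observation that, for a fixed nonterminal and span, the sequences $n(t)$ of distinct encoded subtrees are \emph{prefix-free}: $n(t)$ determines $t$ deterministically given the root symbol, so two distinct trees must diverge at a position at or before either sequence ends. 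Hence the smaller of two first-child sequences decides the concatenation comparison regardless of the second child, giving $\min s_1 s_2=(\min s_1)(\min s_2)$ for a fixed $k$.

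The main obstacle is the presence of cycles in $P$, which make the dependency graph of $L$ circular. Here I would use the structural fact that the span width $j-i$ is non-increasing along every SPPF edge and strictly decreases except for the degenerate splits $k\in\{i,j\}$ coming from $\eps$- and unit-rules; a cycle therefore forces constant width and so involves only nodes sharing a single span $[i,j]$. This suggests processing spans in order of increasing width: when $(A,i,j)$ is handled, every rule-packed child with a proper split $i<k<j$ refers only to already-computed strictly-smaller spans, and the only circular dependencies are among the $O(1)$ same-span nodes reachable through the degenerate splits. By Corollary~\ref{cor:length-finite} (and Remark~\ref{rem:height} in the presence of $\eps$-rules) a least tree never traverses a cycle, so within a span the optimal same-span path has length bounded by the number of same-span nodes; a local fixpoint restricted to that constant-size subgraph therefore converges in $O(1)$ rounds, and a failure to converge — a cycle that strictly decreases the sequence — certifies that $w$ has no least tree.

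Finally I would tally the cost. There are $O(|w|^2)$ spans and $O(1)$ grammar-dependent nodes per span, hence $O(|w|^2)$ nodes overall, each storing a sequence of length $O(|w|)$ by Corollary~\ref{cor:length-finite}/Remark~\ref{rem:height}. Computing $L((A,i,j))$ scans its $O(|w|)$ candidate splits and compares the resulting length-$O(|w|)$ sequences, costing $O(|w|^2)$ per node, while the same-span fixpoint adds only $O(|w|)$ more. Summing $O(|w|^2)$ per node over $O(|w|^2)$ nodes yields the claimed $O(|w|^4)$ bound, and the selected rule-packed nodes, read in pre-order, give the desired oCFG derivation.
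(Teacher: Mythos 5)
Your proposal follows essentially the same route as the paper's proof: construct the SPPF in $O(|w|^3)$, perform a bottom-up labelling of nodes with the rule-index sequences of their least subtrees, select the lexicographically least rule-packed child at each symbol node, bound the label length by $O(|w|)$ via Corollary~\ref{cor:length-finite}, and dispose of cycles by the argument from Theorem~\ref{thm:decide-well-ordered}. Your version supplies some justifications the paper leaves implicit (notably the prefix-freeness of the sequences, which legitimizes minimizing the two children of a split independently, and the span-width ordering that confines cyclic dependencies to constant-size subproblems), but the decomposition and the $O(|w|^4)$ accounting are the same.
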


\begin{proof}
First, we assume we have no cycles in the corresponding SPPF.
We do a bottom up traversal of the SPPF, labelling along the way a node with the concatenation of indices of rules used in a left-most derivation, of the smallest parse tree below it. Thus, when encountering a symbol node with multiple packed node children, these nodes will have different labels, describing the rules used in a derivation of the  smallest parse tree below them, and then amongst these, we select the packed node, with label being lexicographically the least. The traversal takes cubic time, and comparing two integer labels to find lexicographically the required SPPF node to construct the smallest parse tree (in cases where a node has multiple rule-packed node children), takes time linear in the length of the labels, which is bounded by the height of the SPPF. This provides a $O(h|w|^3)$ complexity bound, with $h$ denoting the height of the SPPF. 
The result, for the case when no cycles are present, now follows by observing that $h$ is of order $O(|w|)$.

Now we  consider the complication caused by the removed back edges. If any of these add cycles to the selected parse tree,  the argument used in the proof of Theorem~\ref{thm:decide-well-ordered} can be applied  to determine if taking any of these will lead to a larger parse tree in the order induced by the oCFG, or will lead to an infinite decreasing sequence of trees, if the cycle is repeatedly taken. 
This will inform us if the selected tree is minimal, or if no minimal parse tree exists for the given input string.
\end{proof}


\begin{example}
In this example, we consider $S\rightarrow SS\mid b$, with input $bbb$. The root node has packed node children $(S\rightarrow SS,0,1,3)$ and  $(S\rightarrow SS,0,2,3)$. 
The node $(S\rightarrow SS,0,1,3)$ is labelled by $12122$, and $(S\rightarrow SS,0,2,3)$ by $11222$, with $1$ encoding the use of $S\rightarrow SS$ and $2$, the use of $S\rightarrow b$, in a left-most derivation. Thus, with $(S\rightarrow SS,0,1,3)$ we associate the parse tree obtained with the left-most derivation $S\Rightarrow SS\Rightarrow bS\Rightarrow bSS\Rightarrow bbS\Rightarrow bbb$. Similarly, with $(S\rightarrow SS,0,2,3)$ we associate the parse tree obtained with the left-most derivation $S\Rightarrow SS\Rightarrow SSS\Rightarrow bSS\Rightarrow bbS\Rightarrow bbb$. Given that $11222$ is lexicographically less than $12122$, we select the parse tree with left-most derivation $S\Rightarrow SS\Rightarrow SSS\Rightarrow bSS\Rightarrow bbS\Rightarrow bbb$. 
\eqed
\end{example}

\section{Conclusions and Future Work}
\label{sec:conclusion}
 We have shown that oCFGs provide a good way to understand the relationship between PEGs and CFGs, and it has more natural matching semantics than PEGs, but this comes at the price of worse parsing complexity. 
 Ordered context-free grammars is a natural way in which to extend PCRE regex matching to an ordered context-free grammar formalism, in which it is possible to talk about the first match or least parse tree. 
 The natural next step is to build an oCFG parsing tool, which will make it possible to analyse the effort involved for grammar writers to use the oCFG grammar formalism rather than some of the other well-known grammar formalisms. This will also make it possible to determine experimentally if oCFG parsing is fast enough for practical use on large grammars. 
 We are also interested in adding lookahead predicates, as used in PEGs~\cite{ford-pegs}, to oCFGs, and to study the properties of oCFGs with these extensions, similarly to how Bryan Ford  investigated PEGs with these extensions (see~\cite{ford-pegs}). Once this is added to oCFGs, it is no longer necessary to distinguish between the two modes of parsing, i.e.\ prefix and full mode, since full mode can be obtained from prefix mode by using a predicate to specify that the part of the input string being parsed by the start nonterminal, should not be followed by any character. 
Future work also includes a thorough study of which disambiguation can be done with oCFGs and which not, and a study of which disambiguation mechanisms are available in popular compiler generators, and their use in sample grammars. We would also like to investigate interesting and useful subclasses of oCFGs for which parsing can be done in much better time complexity than $O(n^4)$, where $n$ is the length of the input string being parsed.


\section*{Acknowledgement}

I would like to thank Martin Berglund for reading various versions of this document, and suggesting improvements.

\bibliographystyle{eptcs}
\bibliography{main}

\end{document}

\section{Introduction}

Many modern NLP systems, for example large language models such as the GPT models which underpin services like ChatGPT~\cite{chatgpt}, operate on a \emph{tokenization} of text. This tokenization defines an alphabet of symbols (in the formal languages sense) which include as many common words and fragments of words as possible. For example, the OpenAI GPT-2 model has an alphabet size (a ``vocabulary'' in their terminology) of~50,257 tokens, which is enough to turn the sentence ``taking a ride on a boat'' into\footnote{We are for the purposes of this example ignoring some details, involving whitespace and the string beginning and end.} the token sequence ``\tokenbox{taking} \tokenbox{a} \tokenbox{ride} \tokenbox{on} \tokenbox{a} \tokenbox{boat}'', as all words are common enough to be in the alphabet, but ``partaking in a nautical excursion aboard a vessel'' is tokenized as ``\tokenbox{part}\tokenbox{aking} \tokenbox{in} \tokenbox{a} \tokenbox{n}\tokenbox{autical} \tokenbox{exc}\tokenbox{ursion} \tokenbox{ab}\tokenbox{oard} \tokenbox{a} \tokenbox{ves}\tokenbox{sel}'', as the words are uncommon, but fragments of the words are common enough. This alphabet is then more semantically rich (e.g.\ unusual plurals often being represented as a base word plus a token `s'), and makes the model more robust to misspellings and other minor transformations of the text~\cite{bpe}.

One common way of performing this tokenization is by byte pair encoding~\cite{bpe} (BPE), used by OpenAI GPT models, and e.g.\ the 2023 GPT-SW3 model~\cite{gpt-sw3-tokenizer}, which uses the Google tokenizer implementation SentencePiece~\cite{sentencepiece}. BPE operates similar to a compression technique, with a dictionary of token merges constructed greedily maximizing the number of tokens that get merged in a training set (see Remark~\ref{remark:bpe-construction}).

The way the tokenization procedure is defined and implemented in common tools\cite{huggingface-gpt-2-py,sentencepiece} is essentially \emph{global:} the highest priority rule that can be applied to the text is, no matter where in the text this application would happen. This is not usually a problem, as the text is \emph{pretokenized} by splitting it on whitespace before applying the main tokenization procedure. This is not always possible or desirable however:

\noindent
\emph{Notes below.}

For some local pride I figure we can headline the GPT-SW3 use of the sentencepiece BPE tokenizer~\cite{gpt-sw3-tokenizer}.

\begin{itemize}
\item tokenization used all over for ML NLP
\item BPE a popular kind, used e.g. by OpenAI models (e.g. GPT-2)
\item for some languages pretokenization (i.e.\ cutting on whitespace) is not possible, making it performance sensitive -  BPE often relies on a pre-tokenizer that splits the training data into words. Pretokenization can be as simple as space tokenization, e.g. GPT-2, Roberta - which is obviously problematic with non-whitespace languages such as Thai. 
\item incrementally updating tokenizations of huge texts? small changes in underlying text will often give small updates to the tokenizations, but not always. incremental operations: 
\begin{itemize}
  \item given tokenized string $s=uv$ what is the tokenization of $u$ and $v$?
  \item given tokenized strings $u$ and $v$ what is the tokenization of $uv$?
\end{itemize}
\item denial of service attack risks on incremental systems? (even pretokenized this is possible; just make up a ridiculously long word, e.g. a document with whitespace dropped)
\item e.g. huggingface and sentencepiece use heaps for $n\log n$ish (?) tokenization performance, but seems perfectly implementable on a functional string-to-string transducer.
\item huggingface: if pair $x$ is applied then apply to all possible $x$ at once; whereas sentencepiece: apply top available pair. Only differs when degenerate.
\item argue for transducer: can do streaming tokenization
\item could also mention the unspeakable words saga, that turned out to be Reddit usernames
\end{itemize}

\section{Definitions}

An alphabet $\sig$ is a finite set of symbols. Let $\sig^*$ denote all strings over the alphabet $\sig$, including the empty string $\eps$, and $\sig^+=\sig^*\setminus \{\eps\}$. A sequence of non-empty strings is a \emph{tokenization}, e.g.\ for $u_1,\ldots,u_n\in \sig^+$, we denote this $u_1 \tok \cdots \tok u_n$. By $\sig^{\tok}$ we denote the set of all tokenizations constructed from strings from $\sig^*$. We refer to those strings as the \emph{tokens}. 
Let $\pi : \sig^{\tok} \to \sig^*$ be the concatenation of the strings in a tokenization, e.g.\ $\pi(u_1\tok \cdots \tok u_n)=u_1\cdots u_n\in \sig^*$. As a special case, we let $\pi$ applied to the tokenization with $n=0$, be the empty string. 
When $\pi(u_1\tok \cdots \tok u_n)=w\in\Sigma^*$, we say that \emph{$u_1\tok \cdots \tok u_n$ is a tokenization of $w$}.
For $\tau = u_1 \tok \cdots \tok u_n$, we denote by $|\tau|$ the integer $n$. Thus, $|\tau|=0$ if and only if $\tau$ is the empty tokenization.
Also, for $u\in\Sigma^*$, we let $|u|$ denote the length of $u$, i.e.\ the number of symbols from $\Sigma$ in the string $u$. In will be clear from the context and notation when $\tau$ denotes a tokenization with $|\tau|=1$, i.e.\ a tokenization of length one instead of a string of length one, given that in this case, $\tau$ could be interpreted as either. In fact, given our notational conventions discussed below, the symbol $\tau$ will always represent a tokenization.

In addition to using $|\tau|$ and $|u|$ for the length of a tokenization $\tau$ and length of a string $u$ respectively, we use $|S|$ to denote the cardinality of a (finite) set $S$. 

Differentiating between strings and tokenizations becomes important as we continue, so we adopt some conventions. Let $\sig$ denote the alphabet whenever not otherwise specified. When giving examples, we always use $\sig=\{a,b,c,\ldots\}$. Furthermore, we always let $\alpha,\beta,\gamma$ be variables denoting symbols from the alphabet, $u,v,w$ denote strings, and $\tau,\phi$ denote tokenizations, including all sub-/superscripted variants of each. In other words, $u,v,w\in \sig^*$ and $\tau,\phi,\psi\in \sig^{\tok}$, and for that matter $u_1,u_2\in \sig^*$, $\tau' \in \sig^{\tok}$, etc. As such we may write e.g.\ $u \tok \tau \tok v = \phi$ to mean that $\tau$ is a tokenization where the first token is $u$, the last is $v$, and the intervening tokens form the tokenization $\phi$, so $|\tau|=|\phi|+2$.

Next, we define a byte pair dictionary, which will be used to restrict the set of possible tokenizations for a given string $w$. We will only use $D$ and its sub-/superscripted variants to denote a dictionary.

\begin{definition}
    A byte pair \emph{dictionary $D=[u_1\tok v_1,\ldots,u_n\tok v_n]$} of length $|D|=n$ is a sequence of tokenizations $u_1\tok v_1,\ldots,u_n\tok v_n$, with each tokenization $u_i\tok v_i$ being of length 2. We call each $u_i \tok v_i$ a \emph{rule} and say that (a rule) $u_i \tok v_i$ has \emph{higher priority} than $u_j \tok v_j$, when $i<j$.
\end{definition}

We write $\tau \Rightarrow^D \tau'$ if $\tau=\phi \tok u \tok v \tok \phi'$ and $\tau'=\phi \tok uv \tok \phi'$, for some $u\tok v \in D$.
The dictionary $D$ will always be clear from the context, thus we omit the superscript on $\Rightarrow$. We let $\Rightarrow^+$ and $\Rightarrow^*$ denote the transitive and reflexive transitive closure of $\Rightarrow$, respectively. Also, for $w=\alpha_1\cdots\alpha_n$, we denote by $\TkEmpty(w)$ the tokenization $\alpha_1\tok\cdots\tok \alpha_n$. 

Next, we transfer terminology used in derivations over context-free grammars, to our setting. For $w=\alpha_1\ldots\alpha_n$, we begin a derivation for a tokenization with $\TkEmpty(w)$, although a more complicated pretokenizer step could certainly also be of interest, but not considered in this paper. Whereas in the case of context-free grammars, a derivation step consists of applying a grammar rule by replacing the non-terminal on the left-hand side of a rule, by its right-hand side, in our setting, a derivation step is of the form $\phi\tok u_i\tok v_i\tok\phi'\Rightarrow \phi\tok u_iv_i\tok\phi'$, for $u_i\tok v_i$ in $D$. A derivation terminates when no further rules from $D$ can be applied. Also, just as in the case of context-free derivations, we talk about sentential forms when referring to intermediate tokenizations.

The definition of a base tokenizer on $D$, which ignores the priority of rules in $D$, is as follows.

\begin{definition}
  \label{defn:tokenization}%
  For $D=[u_1 \tok v_1, \ldots, u_m \tok v_m]$ and $w=\alpha_1 \cdots \alpha_n$, we obtain the \emph{base tokenizations of $w$ by $D$,} denoted as $\Tk^D_{\textrm{base}}(w) \subset \sig^{\tok}$, as follows. 
We have $\tau_p \in \Tk^D_{\textrm{base}}(w)$ if: 
  \begin{itemize}
  \item $\tau_0=\TkEmpty(w)$,
    \item $\tau_0 \Rightarrow \cdots \Rightarrow \tau_p$,
    \item there exists no $\tau_{p+1}$ such that $\tau_p \Rightarrow \tau_{p+1}$.
  \end{itemize}
\end{definition}

That is, $\Tk^D_{\textrm{base}}(w)$ are the tokenizations of $w$ which can be achieved by applying rules to an initial tokenization $\TkEmpty(w)$ where all symbols in $w$ are their own token, until a point where no further rules can be applied. 
Thus, to obtain one of the possible base tokenizations, we select non-deterministically a rule from $D$ that can be applied to the current tokenization, until the set of rules that could be applied, is empty.
Given that $|\Tk^D_{\textrm{base}}|\ge1$, since there is non-deterministic choice in selecting the next applicable rule, a SentencePiece tokenizer~\cite{sentencepiece} is defined in such a way to remove ambiguity from the base tokenizer. We will (in a somewhat biased way) refer to this tokenization as the correct tokenization.

\begin{definition}
  \label{defn:tokenization-correct}%
  The \emph{SentencePiece} tokenization of $w$, denoted $\Tk^D(w)$, also referred to as the \emph{correct tokenization} of $w$, is $\tau_n$, with $\tau_n\in \Tk^D_{\textrm{base}}(w)$,  where:
  \begin{itemize}
  \item $\tau_0=\TkEmpty(w)$;
  \item $\tau_0 \Rightarrow \cdots \Rightarrow \tau_n$, and for $0\le i < n$, we pick the decomposition $\tau_i=\phi \tok u \tok v \tok \phi'$, to obtain $\tau_{i+1}=\phi \tok uv \tok \phi'$, in such a way that:
  \begin{itemize}
    \item $u\tok v$ is the highest priority rule in $D$ for which such a decomposition exists;
    \item among the remaining decompositions, we pick the one which minimizes $|\phi|$;
  \end{itemize} 
  \item no further rules apply to $\tau_n$.
  \end{itemize}
\end{definition}

Observe that $\Tk^D(w)$ always exists, and is obtained, intuitively, as follows.
Whenever it is possible to apply a rule from the dictionary $D$ to merge some tokens in the interim tokenization, merge the \emph{highest-priority} rule that occurs, and merge the left-most such pair if multiple occurrences exist. Note that this selects a unique tokenization, for each string $w$, from the set $\Tk^D_{base}(w)$.

\begin{example}
  Take the dictionary $D=[a\tok b, a \tok bc, b\tok c, ab \tok c]$, then the correct tokenization of the string $abcbcab$ is $abc \tok bc \tok ab$, using the following steps:
  \begin{itemize}
    \item Initially, $\tau_0 = a\tok b\tok c \tok b \tok c \tok a \tok b$, $a\tok b\in D$ applies to the leftmost $a \tok b$, producing $\tau_1 = ab \tok c \tok b \tok c \tok a \tok b$.
    \item The rule $a\tok b$ still applies, now to the last two tokens, producing $\tau_2 = ab \tok c \tok b \tok c \tok ab$. The rule $a \tok b$ now no longer applies anywhere.
    \item The next rule $a \tok bc$ does not apply, as there is no token $bc$, but $b \tok c$ does apply, producing $\tau_3 = ab \tok c \tok bc \tok ab$.
    \item The first rule that can now be applied is the rule $ab \tok c$, producing $\tau_4 = abc \tok bc \tok ab$. Now, no further rules apply, so $abc \tok bc \tok ab$ is the correct (SentencePiece) tokenization.
  \end{itemize}
  It is interesting to observe that the rule $a \tok bc$ has the second-highest priority in the dictionary, but was never applied. Also, when using the base tokenizer, $a \tok bc$ can certainly be applied to the string $abcbcab$. Observe that applying $a\tok bc$ would produce the token $abc$, but tokenizing the string $abc$ takes the steps $a\tok b \tok c \Rightarrow ab \tok c \Rightarrow abc$. Corollary~\ref{cor:useless} states that a rule $u_i\tok v_i$ is \emph{useful}, i.e.\ gets applied in the tokenization of some string, if and only if it gets applied when tokenizing the string $u_iv_i$. 
\end{example}

\begin{example}
  \label{ex:infinite-ripple}%
  Take the dictionary $D=[c\tok ab, ab \tok c, a \tok b]$. Then the correct tokenization of $abcabcabcabc$ is $abc \tok abc \tok abc \tok abc$. Notice that this tokenization is achieved left to right. After five steps, we have the tokenization $abc \tok abc \tok ab \tok c \tok a \tok b \tok c$. Contrast this to tokenizing the string $bcabcabcabc$ (i.e.\ we delete the initial $a$) which tokenizes as $b\tok cab \tok cab \tok cab\tok c$, or $cabcabcabcabc$ (i.e.\ adding an initial $c$), which leads to $cab \tok cab \tok cab \tok cab \tok c$, where we need to move left to apply $c\tok ab$, after having applied $a \tok b$.
\end{example}

Interestingly enough, not all tokenization libraries modify the base tokenizer in the same way in order to eliminate ambiguity of tokenization. Let us consider the Python implementation of the GPT-2 tokenizer offered by HuggingFace~\cite{huggingface-gpt-2-py}, which removes ambiguity from the base tokenizer, as follows.

\begin{definition}
  \label{defn:tokenization-hf}%
  The \emph{HuggingFace tokenization} $\tau_n$ of $w$, which we denote by $\Tk^D_{\textrm{hf}}(w)\in \Tk^D_{\textrm{base}}(w)$, is defined as follows, where $\tau_0 = \TkEmpty(w)$. In $\tau_0 \Rightarrow^+ \cdots \Rightarrow^+ \tau_n$, for $0\le i < n$, we select a decomposition $\tau_i=\phi \tok u \tok v \tok \phi'$, such that $u\tok v\in D$ is the highest priority rule applicable to $\tau_i$, and then apply $u\tok v$ from left to right, until it is no longer applicable, in order to obtain $\tau_{i+1}$.
\end{definition}

That is, in both Definitions~\ref{defn:tokenization-correct} and~\ref{defn:tokenization-hf} we rewrite $\tau_i$ by picking the highest-priority applicable rule and applying it at the left-most possible position. However, the SentencePiece semantics picks the highest-priority applicable rule in \emph{every} step, where HuggingFace picks a rule and uses it until it becomes inapplicable. This does create a formal difference in semantics, but as we will see they differ only in cases which may be considered degenerate given the way dictionaries are usually constructed.

\begin{example}
  Take the dictionary $D=[ab\tok a, a\tok b]$ and consider the tokenization of $w=abababab$, which has $\Tk^D(w)=aba \tok b \tok aba \tok b$, but $\Tk^D_{\textrm{hf}}(w)=ab \tok ab\tok ab \tok ab$.
\end{example}

\begin{remark}
  \label{remark:bpe-construction}%
The reason why the $D$ in the previous example is regarded to be degenerate or improper, is that a byte pair dictionary is typically produced~\cite{bpe} by taking a training corpus, initially tokenizing it symbol by symbol, and then iteratively adding the most common token pair to the dictionary, tokenizing, and repeating. For example, in the training corpus $a\tok b \tok c\tok a\tok b\tok c\tok a\tok c\tok a$, the most common token pair is $c\tok a$, which is inserted as the first rule in the dictionary. Then, we continue using $a\tok b \tok ca \tok b \tok ca \tok ca$. Now, the most common pair is $b\tok ca$, and this rule is added to the dictionary, which now consists of the rules $[c\tok a, b \tok ca]$. The new dictionary now produces $a\tok bca \tok bca \tok ca$, as tokenization, and so on. Observe that when constructing a dictionary in this way, a rule $u_i\tok v_i$ cannot have higher priority than the rules needed to produce $u_i$ and $v_i$, a property formalized in the next definition. 
Also, when tokenizing the training corpus with the dictionary obtained through training, using HuggingFace semantics, the tokenization obtained will be the tokenization of the training corpus at the end of the training process, and each rule in the dictionary will be used in this tokenization.
\end{remark}
\begin{definition}
  A dictionary $D=[u_1\tok v_1,\ldots u_i\tok v_i,\ldots,u_j\tok v_j,\ldots,u_n\tok v_n]$ is \emph{proper} if for each $j$ with $|u_j|>1$, there exits an $i<j$ such that $u_j=u_{i}v_{i}$, and similarly, for each $j'$ with $|v_{j'}|>1$, there exists an $i'<j'$ such that $v_{j'}=u_{i'}v_{i'}$.
\end{definition}

Note that each rule in a proper dictionary might not be useful. Consider for example the dictionary $D=[b\tok c,a\tok b,c\tok d, ab\tok cd]$. Then $D$ is proper, but $ab\tok cd$ is not useful. This can be seen by noting that when $\TkEmpty(w)$ contains $a\tok b\tok c \tok d$, then the first rule gets applied to produce $a\tok bc \tok d$, and thus it is not possible to apply the rules $a\tok b$ and $c\tok d$, in order so that $ab\tok cd$ could be applied. But note that if $D$ is constructed from a training corpus, then $D$ is proper and each rule in $D$ is useful. \brinkcomment{Should we worry here about SentencePiece vs HuggingFace, when it comes to useful - of course, the next lemma shows for proper we don't have to worry.}

With the additional assumption that the dictionary is proper, the SentencePiece and HuggingFace tokenizers turn out to have equivalent semantics. This should to some extent be expected, as they are intended to achieve the same results, the HuggingFace approach effectively being a small simplification.
\begin{lemma}
  \label{lemma:proper-unifies-semantics}%
  If $D$ is proper we have $\Tk^D(w)=\Tk^D_{\textrm{hf}}(w)$ for all $w$.
\end{lemma}
\begin{proof}
  By contradiction, assume that some $w$ has $\Tk^D(w)\ne \Tk^D_{\textrm{hf}}(w)$. Let $\tau_0 \Rightarrow \cdots \Rightarrow \tau_n$ be the tokenization steps taken by $\Tk^D(w)$, and $\phi_0 \Rightarrow \cdots \Rightarrow \phi_m$ the tokenization steps taken by $\Tk^D_{\textrm{hf}}$. Let $i$ be the smallest index such that $\tau_{i} \ne \phi_{i}$, such an $i$ must exist, as by Definition~\ref{defn:tokenization} we cannot have $\tau_n \Rightarrow \tau_{n+1}$ or $\phi_m \Rightarrow \phi_{m+1}$ for any $\tau_{n+1}$ of $\phi_{m+1}$, so one sequence cannot be a prefix of the other. We also have $i\ge 2$, as the semantics differ only in that Definition~\ref{defn:tokenization-hf} prefers repeating the previous rule over the highest priority one, but this difference can only be exhibited when there is a previous step to repeat.

  We then have $\tau_{i-2}=\phi_{i-2}$, $\tau_{i-1}=\phi_{i-1}$ and $\tau_i\ne \phi_i$. Let $r_0$ be the rule applied going $\tau_{i-2}\Rightarrow \tau_{i-1}$ (and $\phi_{i-2}\Rightarrow \phi_{i-1}$ as they are equal), $r_1$ the rule going $\tau_{i-1} \Rightarrow \tau_i$, and $r_2$ the rule going $\tau_{i-1} \Rightarrow \phi_i$. That is, with some abuse of notation, the following situation:
  \begin{center}
    \begin{tikzpicture}[node distance=0pt]
      \node (a) {$\cdots \Rightarrow (\tau_{i-2}=\phi_{i-2}) \xRightarrow{r_0} (\tau_{i-1}=\phi_{i-1})$};
      \node[right=of a,yshift=6pt,rotate=15] {$\xRightarrow{r_1} \tau_i \Rightarrow \cdots$};
      \node[right=of a,yshift=-6pt,rotate=-15] {$\xRightarrow{r_2} \phi_i \Rightarrow \cdots$};

    \end{tikzpicture}
  \end{center}
  Then we know that $r_1$ has higher priority than $r_2$, since they must differ, and Definition~\ref{defn:tokenization-correct} (SentencePiece) always picks the highest priority rule applicable. The only possible reason for them to differ is that $r_0=r_2$, i.e.\  the Definition~\ref{defn:tokenization-hf} (HuggingFace) semantics prioritized using the same rule as the previous step. However, as they agree in the previous step this means Definition~\ref{defn:tokenization-correct} semantics picked $r_0$ in that step even though $r_1$ has higher priority than $r_0$, which must mean that $r_1$ was \emph{not} applicable in the previous step. This leads to a contradiction, as applying $r_0$ must then have created a token which made $r_1$ applicable, which with $r_0$ lower priority contradicts $D$ being a proper dictionary. As such, our assumption was wrong and $\Tk^D_{\textrm{hf}}(w)=\Tk^D(w)$ by necessity. 
\end{proof}

With this result in hand, it becomes less relevant to differentiate between the two semantics whenever considering only proper dictionaries.

\begin{remark}
  It can be decided whether $D$ is proper in time $\mathcal{O}(\|D\|^2)$, where $\|D\|=\sum \{|uv| \mid u\tok v \in D\}$. 
  For each $u\tok v\in D$, determine all $u'\tok v'$ such that $uv$ is a substring of $u'$ or $v'$, and for all such rules  $u'\tok v'$, verify that $u'\tok v'$ has lower priority than $u\tok v$.
\end{remark}


Next, we investigate if and when we can use the tokenization of a substring of $w$, in order to construct the tokenization of $w$. First, we consider the following example. Let $u,v$ be strings with $\Tk^D(u)=\tau_1 \tok \tau_2$ and $D(v)=\phi_1 \tok \phi_2$. Then it is not necessarily the case that $\tau_1 \tok \phi_2 = \Tk^D(\pi(\tau_1\tok \phi_2))$ or that $\phi_1 \tok \tau_2 = \Tk^D(\pi(\phi_1\tok \tau_2))$. An easy counterexample is obtained by letting $D=[a \tok a, b\tok b]$, $\tau_1 = a$, $\tau_2=b$, $\phi_1=b$, and $\phi_2=a$. Observe that indeed $\Tk^D(ab)=a\tok b= \tau_1 \tok \tau_2$, and $D(ba)=b\tok a=\phi_1 \tok \phi_2$, \emph{however} $\tau_1\tok \phi_2=a\tok a\ne \Tk^D(aa)$ and $\phi_1 \tok \tau_2 = b\tok b \ne \Tk^D(bb)$. 
This shows that tokenizations can not be decomposed and then again glued together in arbitrary ways. 
However, deriving the tokenization of substrings, given the final tokenization, is sometimes possible, as shown in the following lemma.


\begin{lemma}
  \label{lemma:robust-tok}%
  For a dictionary $D$ and string $w$ such that $\Tk^D(w)=\tau_1 \tok \ldots \tok \tau_k$, it holds that $\Tk^D(\pi(\tau_i))=\tau_i$. The result also holds for HuggingFace semantics. In general, when using SentencePiece  or HuggingFace semantics, if $\phi_1\tok\ldots\tok\phi_k\Rightarrow^*\phi'_1\tok\ldots\tok\phi'_k$, then if $\pi(\phi_i)=\pi(\phi'_i)$ for all $i$, we have that $\phi_i\Rightarrow^*\phi'_i$ for all $i$. 
\end{lemma}
\begin{proof}
  Let $\phi_{1,1} \tok \ldots \tok \phi_{k,1} \Rightarrow \cdots \Rightarrow \phi_{1,n} \tok \ldots \tok \phi_{k,n}$ be the steps taken by the procedure in Definition~\ref{defn:tokenization}, such that $\phi_{i,1}=\phi_i$ and $\phi_{i,n}=\phi'_i$ for all $i$, and $\pi(\phi_{i,j})=\pi(\phi_{i,j'})$ for all $i,j$ and $j'$.
  Removing all duplicates from $\phi_{i,1},\ldots,\phi_{i,n}$ produces the sequence of steps taken by SentencePiece derivations, i.e.\ 
  in each step we apply the highest-priority rule from $D$ as left-most as possible, in the case of SentencePiece semantics, and we apply the highest-priority rule  as many times as possible, in the case of HuffingFace semantics.

  For the first statement, take $\phi_i=\TkEmpty(\tau_i)$ and $\phi'_i=\tau_i$.
\end{proof}




\begin{corollary}\label{cor:useless}
For SentencePiece or HuggingFace semantics, a rule in a dictionary $D$ is useful if and only if it gets applied when tokenizing the string it produces.  
\end{corollary}
    
\begin{proof}
The ``if'' part follows directly from the defintion of useful, so for the converse, assume
we have a derivation $\alpha_1\tok\cdots\tok\alpha_n\Rightarrow \cdots \Rightarrow \phi \tok u \tok v \tok \phi' \Rightarrow \phi\tok uv\tok \phi'$. But then the previous lemma implies that $\TkEmpty(uv)\Rightarrow^* u\tok v\Rightarrow uv$.
\end{proof}

The main purpose of Lemma~\ref{lemma:robust-tok} is that it makes it possible to do tokenizations in streaming and incremental ways. The following corollary may illustrate this purpose more clearly.
\begin{corollary}
  \label{cor:bordered-stays-same}%
Assume $\Tk^D(v)=\phi_1\tok\ldots\tok\phi_k$ and $\Tk^D(w)=\psi_1\tok\ldots\tok\psi_l$ with $\pi(\phi_{i'})=\pi(\psi_{j'})$ for some $i'$ and $j'$, then $\phi_{i'}=\psi_{j'}$. The same result holds for HuggingFace semantics. 
\end{corollary}
\begin{proof}
Follows from Lemma~\ref{lemma:robust-tok}, since by Lemma~\ref{lemma:robust-tok}, the tokenizations $\Tk^D(\pi(\phi_{i'}))$ and  $\Tk^D(\pi(\phi_{j'}))$ are completely determined by $\pi(\phi_{i'})=\pi(\psi_{j'})$ and are not influenced by other $\phi_i$ or  $\psi_i$.
\end{proof}

\section{Tokenizing Online With Finite Lookahead}

In this section, we investigate the following question. When we tokenize a string, in a streaming fashion, how long is the suffix that we need to tokenize again, when we resume tokenization, if we make no assumptions about the input string being tokenized. We refer to this constant as the lookahead constant of a dictionary $D$, and denote it by $l(D)$. One of the first thing to show is that this constant is indeed finite.

\begin{lemma}
  \label{lemma:hf-priority-order}%
  For any proper dictionary $D$ and $\Tk^D_{\textrm{hf}}(w)=\tau$, letting $r_1, \ldots, r_n$ be the sequence of rules applied to produce $\tau$ according to Definition~\ref{defn:tokenization-hf}, it must be the case that $r_1, \ldots, r_n$ are in order of (not necessarily strictly) decreasing priority. 
\end{lemma}
\begin{proof}
  By contradiction, assume that there exists some $r_i$ such that $r_{i+1}$ is higher priority than $r_i$. This means that the tokenization after the first $i$ steps contains the pair $r_{i+1}= u \tok v$, but this pair cannot have been created by the applications of $r_i$, as it is lower priority than $r_{i+1}$ and $D$ is proper, and it also cannot have existed in the tokenization when $r_i$ was picked as the rule to next apply, as that goes against how rules are picked in Definition~\ref{defn:tokenization-hf}. As such, our assumption was wrong, and $r_i$ is higher priority or equal to $r_{i+1}$. 
\end{proof}

This lemma will be used to perform HuggingFace tokenization in a straightforward inductive way, where for a proper dictionary $D=[u_1\tok v_1,\ldots, u_n\tok v_n]$ we can tokenize a string by first applying rule $u_1 \tok v_1$ as many times as possible, then the rule $u_2\tok v_2$  as many times as possible, and so on. We will use the term \emph{refinement} for the way a tokenization is developed in this way, i.e.\ a \emph{tokenization $\tau$ is a refinement of $\tau'$}, if  $\tau'$ can be obtained from $\tau$ by applying $\pi$ to some of the subtokenizations in $\tau$. For example, $\phi\tok\phi'\tok\phi''$ is a refinement of $\phi\tok\pi(\phi')\tok\phi''$ and also of $\pi(\phi)\tok\pi(\phi')\tok\phi''$. We can also consider the opposite notion of a tokenization $\tau'$ being \emph{coarser} than $\tau$ , if $\tau$ is a refinement of $\tau'$. Thus, tokenizations are obtained, in general, by using $D$ to obtain courser and courser tokenizations.
Also, recall that $|D|$ denotes the number of rules in $D$.


\begin{theorem}
  \label{thm:left-ripple-bound}%
  Let $D$ be proper, $|\tau|\ge |D|$ and $\Tk^D_{\textrm{hf}}(\pi(\phi)\pi(\tau))=\phi\tok \tau$. Then $\Tk^D_{\textrm{hf}}(\pi(\phi \tok \tau) \pi(\psi)) = \phi \tok \psi'$, for some tokenization $\psi'$. 
\end{theorem}

\begin{proof}
Let $D=[u_1 \tok v_1, \ldots, u_n \tok v_n]$ and $D_i$ be the $i$-length prefix of $D$, i.e.\ $D_i=[u_1 \tok v_1,\ldots, u_i \tok v_i]$ for each $i$. Then we can determine $\Tk^D_{\textrm{hf}}(w)$ inductively in the following way: first calculate $\Tk^{D_1}_{\textrm{hf}}(w)$, then assuming we know $\Tk^{D_i}_{\textrm{hf}}$ we can simply apply $u_{i+1}\tok v_{i+1}$ to $\Tk^{D_i}_{\textrm{hf}}(w)$ wherever possible (working left to right) to obtain $\Tk^{D_{i+1}}_{\textrm{hf}}$. This procedure is correct by Lemma~\ref{lemma:hf-priority-order}.
We refer to $\Tk^{D_i}(w)$ as a tokenization at level $i$, and note that $\Tk^{D_i}(w)$ is a refinement of $\Tk^D(w)$.

Let $\phi\tok\tau = w_1\tok\ldots\tok w_k$ and $w=\pi(\phi \tok \tau) \pi(\psi)$. We show that the tokenization of $w$ at level $i$ is a refinement of a tokenization of the form $w_1\tok\ldots\tok w_{k-i}\tok \psi_i$, thus that $\Tk^{D_i}_{\textrm{hf}}(w)$ is a refinement of a tokenization of the form $w_1\tok\ldots w_{k-i}\tok\psi_i$, for some tokenization $\psi_i$. 
More precisely, when we use a dictionary with $i$ rules, then tokenizing $\pi(\phi\tok\tau)\pi(\psi)$ changes at most the rightmost $i$ tokens in $\phi\tok\tau$.
This implies that when $i=n$, we obtain that $\Tk^{D_n}_{\textrm{hf}}(w)$ is a tokenization of the form $w_1\tok\ldots\tok w_{k-n}\tok \psi'$, i.e.\ not only a refinement of a tokenization of the given form.


First, we show that  $\Tk^{D_1}_{\textrm{hf}}(w)$ is a refinement of a tokenization of the form $w_1\tok\ldots \tok w_{k-1}\tok\psi_1$. Note that $u_1\tok v_1$ could potentially be applied, one or more times, to the substring $w_k\pi(\psi)$, when tokenizing $\pi(\phi\tok\tau)\pi(\psi)$, but $u_1\tok v_1$ is not applied across any of the $(k-1)$ tokenization boundaries in $w_1\tok\ldots\tok w_k$. This must be the case, otherwise $u_1 \tok v_1$ 
would have been applied when computing the tokenization $w_1 \tok \ldots \tok w_k$ with the full dictionary $D$. \mbecomment{In principle we'd like to invoke Corollary~\ref{cor:bordered-stays-same} here to make it clear that not breaking the boundaries means no other changes, but it is currently only stated for SentencePiece. We can invoke Lemma~\ref{lemma:proper-unifies-semantics} to get around that, but probably Corollary~\ref{cor:bordered-stays-same} should cover both cases. Same for case below.}

Moving on to the next rule in terms of priority, $u_2\tok v_2$, we repeat the argument we used for $u_1\tok v_1$. More precisely, $u_2\tok v_2$ could potentially be applied, one or more times, to the substring $w_{k-1}\pi(\psi_1)$, when tokenizing $\pi(w_1\tok\ldots\tok w_{k-1}\tok\psi_1)$, but $u_2\tok v_2$ is not applied across any of the $(k-2)$ tokenization boundaries in $w_1\tok\ldots\tok w_{k-1}$, otherwise it would have when $\pi(\phi\tok\tau)$ was tokenized using $D$. Thus, $\Tk^{D_2}_{\textrm{hf}}(w)$ is a refinement of a tokenization of the form $w_1\tok\ldots \tok w_{k-2}\tok\psi_2$.

Simply iterate this procedure for $i=3,\ldots,n$ and the statement of the theorem follows. 
\end{proof}

\begin{corollary}
  \label{cor:left-ripple-bound-sp}%
  Let $D$ be proper, $|\tau|\ge |D|$ and $\Tk^D(\pi(\phi)\pi(\tau))=\phi\tok \tau$. Then $\Tk^D(\pi(\phi \tok \tau) \pi(\psi)) = \phi \tok \psi'$, for some tokenization $\psi'$. 
\end{corollary}
\begin{proof}
  That is, Theorem~\ref{thm:left-ripple-bound} holds for the semantics of Definition~\ref{defn:tokenization-correct} as well, since $\Tk^D_{\textrm{hf}}(w)=\Tk^D(w)$ for all $w$ when $D$ is proper, by Lemma~\ref{lemma:proper-unifies-semantics}. 
\end{proof}

\begin{example}
In this example, we explore the previous theorem.
Let $\Sigma$ be $\{a,b,c,d,e,f,g,h,i\}$ and $D$ be $[h\tok i,g\tok h,f\tok g,e\tok f,d\tok e,c\tok d,b\tok c,a\tok b]$. Also, let $\phi$ be the tokenization of the string of length $0$, $\tau = ab\tok cd\tok ef\tok gh$ and $\psi = i$. Then, $\Tk^D_{\textrm{hf}}(\pi(\phi\tok \tau)\pi(\psi))=\Tk^D_{\textrm{hf}}(abcdefghi)=a\tok bc\tok de\tok fg\tok hi$. Thus, it is not possible to shorten $\tau$ and move a prefix of $\tau$ to $\phi$.

If we shorten $D$ to be $[h\tok i]$, and make $\phi=g$, $\tau = h$ and $\psi =i$, we obtain that $\Tk^D_{\textrm{hf}}(\pi(\phi\tok\tau)\pi(\psi))= g \tok hi$. Thus, in this case, we have that the bound $|\tau|\ge |D|=1$ in the previous theorem is tight, i.e. we can not shorten $\tau$ and make it the tokenization of the empty string, and $\phi=h\tok i$, if we want $\Tk^D(\pi(\phi \tok \tau) \pi(\psi)) = \phi \tok \psi'$ in this case.
On the other hand, when taking $D=[h\tok i,f\tok g]$, then $\Tk^D_{\textrm{hf}}(\pi(fg\tok h)\pi(i))=fg\tok hi$, thus here it is sufficient to take $|\tau|=1$, i.e.\ $\tau = h$, instead of all of $fg\tok h$ (note, $|D|=2$).
\end{example}

Theorem~\ref{thm:left-ripple-bound} and Corollary~\ref{cor:left-ripple-bound-sp} may at first appear quite abstract, but they demonstrate a fact that is very useful in practice: a finite lookahead is sufficient to tokenize a string from left to right.
\begin{definition}
  The \emph{sufficient lookahead} of a proper dictionary $D$ is $(|D|+1)\cdot\max\{|uv| \mid u\tok v \in D\}$.
\end{definition}
That is, for a proper dictionary $D$ and a string $w$, if we know that $\phi$ is a prefix of $\Tk^D(w)$ (beginning the process by taking $|\phi|=0$) we can compute a prefix $\phi\tok u$ by inspecting only the sufficient lookahead many next symbols.

a finite prefix of the string, specifically the following.
That is, the symbols in $w$. Observe that this lookahead length does not depend on $|w|$. This has potential to improve tokenization performance by cache locality (where e.g.~SentencePiece~\cite{sentencepiece} and HuggingFace~\cite{huggingface-gpt-2-py} access the string contents with random access), but also enables doing streaming tokenization using a constant amount of memory, for when the entire string is not available or impractical to hold in memory. One way to express this finite state tokenization approach is as a deterministic string-to-string transducer. First, to avoid special cases for the end of the string lets define a simple normal form.
\begin{definition}
  For a dictionary $D$ over the alphabet $\sig$, let $k$ be the sufficient lookahead for $D$, assume $z\notin \sig$, then a string $v \in (\sig\cup\{z\})^*$ is the \emph{end-padding of $w$} if it is of the form $wzz\cdots z$ where there are a total of $k$ trailing $z$s.
\end{definition}
\begin{remark}
  Observe that if $v$ is $w$ end-padded then we have $\Tk^D(v)=\Tk^D(w) \tok \phi$ where $\phi=z\tok \cdots \tok z$, since $D$ contains no rules involving $z$. This means that tokenizing $v$ from the left to right we can end the procedure the moment the lookahead consists only of $z$s, as that is the padding which will always tokenize to $\phi=z\tok \cdots \tok z$.
\end{remark}
This allows us to state a straightforward left-to-right tokenization algorithm without having special cases for when the 
\begin{algorithm}
  \label{algo:finite-state}%
  Let $D$ be a proper dictionary and $k$ its sufficient lookahead. Then precompute $f : \sig^k \to \sig^{\tok}$ such that $f(w)=u$ where $\Tk^D(w)=u \tok \tau$ for some $\tau$. Observe that then $\Tk^D(ww')=u \tok \tau'$ for all $w'$ as well, by Theorem~\ref{thm:left-ripple-bound}, or rather Corollary~\ref{cor:left-ripple-bound-sp}.

  Then for any string $w$ let $v$ be its end-padding, we can then compute $\Tk^D(v)$ using the following steps.
  \begin{enumerate}
    \item Split $v=v'v''$ such that $|v'|=k$.
    \item If $v'=z\cdots z$, halt.
    \item Lookup $f(v')=u$, output $u$.
    \item Split $v=uu'$ (observe that $u$ must be a prefix of $v'$ and in turn $v$).
    \item Update $v$ to be $u'$, then go to 1.
  \end{enumerate}
\end{algorithm}

\begin{theorem}
  For any fixed proper $D$ and any string $w$ Algorithm~\ref{algo:finite-state} outputs $\Tk^D(w)$ using space $\mathcal{O}(1)$ and time $\mathcal{O}(|w|)$.
\end{theorem}
\begin{proof}
\end{proof}

\bibliographystyle{eptcs}
\bibliography{main}

\end{document}